\newcommand{\bcdot}{\boldsymbol{\cdot}}
\renewcommand{\cite}{\citep}
\renewcommand{\vec}{\bm}
\DeclareMathOperator*{\var}{var}
\begin{document}

\title{A Deep Reinforcement Learning Framework for the Financial Portfolio Management Problem}

\author{\name{Zhengyao Jiang} \email{zhengyao.jiang15@student.xjtlu.edu.cn}\\
	\name{Dixing Xu} \email{dixing.xu15@student.xjtlu.edu.cn}\\
	\addr Department of Computer Sciences and Software Engineering\\
	\name{Jinjun Liang} \email{jinjun.liang@xjtlu.edu.cn} \\
	\addr Department of Mathematical Sciences\\
	Xi'an Jiaotong-Liverpool University \\
	Suzhou, SU 215123, P.\ R.\ China
	}

\editor{XZY ABCDE}

\maketitle

\begin{abstract}
Financial portfolio management is the process of constant redistribution of a fund into different financial products. 
This paper presents a financial-model-free Reinforcement Learning framework to provide a deep machine learning solution to the portfolio management problem.
The framework consists of the Ensemble of Identical Independent Evaluators (EIIE) topology, a Portfolio-Vector Memory (PVM), an Online Stochastic Batch Learning
(OSBL) scheme, and a fully exploiting and explicit reward function.
This framework is realized in three instants in this work with a Convolutional Neural Network (CNN), a basic Recurrent Neural Network (RNN), 
and a Long Short-Term Memory (LSTM).
They are, along with a number of recently reviewed or published portfolio-selection strategies, examined
in three back-test experiments with a trading period of 30 minutes in a cryptocurrency market.
Cryptocurrencies are electronic and decentralized
alternatives to government-issued money, with Bitcoin as the
best-known example of a cryptocurrency.
All three instances of the framework monopolize the top three positions in all experiments, outdistancing other compared trading algorithms.
Although with a high commission rate of $0.25\%$ in the back-tests, the framework is able to achieve at least 4-fold returns in 50 days.

\end{abstract}

\begin{keywords}
	Machine learning;
	Convolutional Neural Networks;
	Recurrent Neural Networks;
	Long Short-Term Memory;
	Reinforcement learning;
	Deep Learning;
	Cryptocurrency;
	Bitcoin;
	Algorithmic Trading;
	Portfolio Management;
	Quantitative Finance 
\end{keywords}

\section{Introduction}\label{introduction}


Portfolio management is the decision making process of continuously reallocating an amount of fund into a number of different financial investment products,
aiming to maximize the return while restraining the risk \cite{haugen1986inv,markowitz1968portfolio}. 
Traditional portfolio management methods can be classified into four categories, "Follow-the-Winner", "Follow-the-Loser",
"Pattern-Matching", and "Meta-Learning" \cite{li2014survey}.
The first two categories are based on prior-constructed financial models, while they may also be assisted by some machine learning techniques 
for parameter determinations \cite{li2012pamr,cover1996universal}. The performance of these methods is dependent on the validity of the models
on different markets.
"Pattern-Matching" algorithms predict the next market distribution based on a sample of historical data and
explicitly optimizes the portfolio based on the sampled distribution \cite{gyorfi2006nonparametric}.
The last class, "Meta-Learning" method combine multiple strategies of other categories to attain more consistent performance \cite{vovk1998universal,Das2011}.

There are existing deep machine-learning approaches to financial market trading. However, many of them try to predict price movements or
trends \cite{Heaton2016,Niaki2013,freitas2009prediction}.
With history prices of all assets as its input, a neural network can output a predicted vector of asset prices for the next period.
Then the trading agent can act upon this prediction. This idea is straightforward to implement, because it is a supervised learning,
or more specifically a regression problem.  The performance of these price-prediction-based algorithms, however, highly depends on the degree of
prediction accuracy, but it turns out that future market prices are difficult to predict. Furthermore, price predictions are not market actions, 
converting them into actions requires additional layer of logic. If this layer is a hand-coded, then the whole approach is not
fully machine learning, and thus is not very extensible or adaptable. For example, it is difficult for a prediction-based network to consider 
transaction cost as a risk factor.

Previous successful attempts of model-free and fully machine-learning schemes to the algorithmic trading problem, without predicting future prices, are treating 
the problem as a Reinforcement Learning (RL) one.  These include \citet{moody2001drl}, \citet{adempster2006daptive}, \citet{cumming2013masterthesis}, 
and the recent deep RL utilization by \citet{deng2017deep}.
These RL algorithms output discrete trading signals on an asset. Being limited to single-asset trading, they are not applicable to
general portfolio management problems, where trading agents manage multiple assets.

Deep RL is lately drawing much attention due to its remarkable achievements in playing video games \cite{Mnih2015} and board games \cite{silver2016alphago}. These
are RL problems with discrete action spaces, and can not be directly applied to portfolio selection problems, where actions are continuous.
Although market actions can be discretized, discretization is considered a major drawback, because discrete actions come
with unknown risks. For instance, one extreme discrete action may be defined as investing all the capital into one asset, without spreading the risk to
the rest of the market. In addition, discretization scales badly. Market factors, like number of total assets, vary from market to market.
In order to take full advantage of adaptability of machine learning over different markets, trading algorithms have to be scalable.
A general-purpose continuous deep RL framework, the actor-critic Deterministic Policy Gradient Algorithms, was recently introduced \cite{Silver2014,Lillicrap2016}.
The continuous output in these actor-critic algorithms is achieved by a neural-network approximated action policy function,
and a second network is trained as the reward function estimator.
Training two neural networks, however, is found out to be difficult, and sometimes even unstable.

This paper proposes an RL framework specially designed for the task of portfolio management. The core of the framework is the Ensemble of Identical
Independent Evaluators (EIIE) topology. An IIE is a neural network whose job is to inspect the history of an asset and evaluate its potential growth for 
the immediate future. The evaluation score of each asset is discounted by the size of its intentional weight change for the asset
in the portfolio and is presented to a softmax layer, whose outcome will be the new portfolio weights
for the coming trading period. The portfolio weights define the market action of the RL agent. An asset with an increased target weight will be bought in with
additional amount, and that with decreased weight will be sold. Apart from the market history, portfolio weights from the previous trading period
are also input to the EIIE. This is for the RL agent to consider the effect of transaction cost to its wealth. For this purpose, the portfolio weights of
each period are recorded in a Portfolio Vector Memory (PVM). The EIIE is trained in an Online Stochastic Batch Learning scheme (OSBL), which is compatible with
both pre-trade training and online training during back-tests or online trading. The reward function of the RL framework is the explicit average of the
periodic logarithmic returns. Having an explicit reward function, the EIIE evolves, under training, along the gradient ascending direction of the function.
Three different species of IIEs are tested in this work, a Convolutional Neural Network (CNN)
\cite{fukushima1980neocognitron,krizhevsky2012imagenet,sermanet2012convolutional}, 
a basic Recurrent Neural Network (RNN) \cite{werbos1988generalization},
and a Long Short Term Memory (LSTM) \cite{hochreiter1997long}.

Being a fully machine-learning approach,  the framework is not restricted to any particular markets. 
To examine its validity and profitability, the framework is tested in a cryptocurrency
(virtual money, Bitcoin as the most famous example) exchange market, Polonix.com.
A set of coins are preselected by their ranking in trading-volume over a time interval just before an experiment.
Three back-test experiments of well separated time-spans are performed in a trading period of 30 minutes.
The performance of the three EIIEs are compared with some recently published or reviewed portfolio selection strategies \cite{li2015olps,li2014survey}.
The EIIEs significantly beat all other strategies in all three experiments

Cryptographic currencies, or simply cryptocurrencies, are electronic and decentralized alternatives to government-issued moneys
\cite{nakamoto2008bitcoin,grinberg2012bitcoin}. While the best known example of a cryptocurrency is Bitcoin, there are more than 100 other
tradable cryptocurrencies competing each other and with Bitcoin \cite{bonneau2015sok}.
The motive behind this competition is that there are a number of design flaws in Bitcoin, and people are trying to invent new coins to overcome
these defects hoping their inventions will eventually replace Bitcoin \cite{bentov2014proof,duffield2014darkcoin}.
There are, however, more and more cryptocurrencies being created without targeting to beat Bitcoin, but with the purposes
of using the blockchain technology behind it to develop 
decentralized applications\footnote{For example, Ethereum is a decentralized platform that runs smart contracts, and Siacoin is the currency for 
buying and selling storage service on the decentralized cloud Sia.
}.
To June 2017, the total market capital of all cryptocurrencies is 102 billions in USD, 41 of which is of
Bitcoin.\footnote{Crypto-currency market capitalizations, http://coinmarketcap.com/, accessed: 2017-06-30.} Therefore, regardless of its design faults,
Bitcoin is still the dominant cryptocurrency in markets. As a result, many other currencies can not be bought with fiat currencies, but only be traded against Bitcoin.

Two natures of cryptocurrencies differentiate them from traditional financial assets, making their market the best test-ground for
algorithmic portfolio management experiments. These natures are decentralization and openness, and the former implies the latter.
Without a central regulating party, anyone can participate in cryptocurrency trading with low entrance requirements.
One direct consequence is abundance of small-volume currencies. Affecting the prices of these penny-markets will require smaller amount of investment,
compared to traditional markets. This will eventually allow trading machines to learn and take advantage of the impacts by their own market actions.
Openness also means the markets are more accessible. Most cryptocurrency exchanges have application programming interface for obtaining market data and
carrying out trading actions, and most exchanges are open 24/7 without restricting frequency of tradings. These non-stop markets are ideal for machines
to learn in the real world in shorter time-frames.

The paper is organized as follows. Section~\ref{section:problem} defines the portfolio management problem that this project is aiming to solve.
Section~\ref{section:data} introduces asset preselection and the reasoning behind it, the input price tensor, and a way to deal with missing data in the 
market history.
The portfolio management problem is re-described in the language RL in Section~\ref{section:rl}.
Section~\ref{section:networks} presents the EIIE meta topology, the PVM, the OSBL scheme.
The results of the three experiments are staged in Section~\ref{section:experiments}.

\section{Problem Definition}\label{section:problem}
Portfolio management is the action of continuous reallocation of a capital into a number of financial assets. For an automatic trading robot,
these investment decisions and actions are made periodically. This section provides a mathematical setting of the portfolio management problem.

\subsection{Trading Period}\label{section:period}
In this work, trading algorithms are time-driven, where time is divided into periods of equal lengths $T$.
At the beginning of each period, the trading agent reallocates the fund among the assets. 
$T=30\,\mathrm{minutes}$  in all experiments of this paper. The price of an asset goes up and down within a period,
but four important price points characterize the overall movement of a period,
namely the opening, highest, lowest and closing prices \cite{rogers1991estimating}.
For continuous markets, the opening price of a financial
instrument in a period is the closing price from the previous period. It is assumed in the back-test experiments that 
at the beginning of each period assets can be bought or sold at the opening price of that period. The justification of 
such an assumption is given in Section~\ref{section:hypotheses}.

\subsection{Mathematical Formalism}\label{section:math}
The portfolio consists of $m$ assets. The closing prices of all assets comprise
the \emph{price vector} for Period $t$, $\vec{v}_t$. In other words, the $i\mathrm{th}$ element of $\vec{v}_t$, $v_{i,t}$, is the closing price
of the $i\mathrm{th}$ asset in the $t$th period.
Similarly, $\vec{v}^{(\mathrm{hi})}_t$ and $\vec{v}^{(\mathrm{lo})}_t$
denote the highest and lowest prices of the period.
The first asset in the portfolio is special, that it is the quoted currency, referred to as \emph{the cash} for the
rest of the article. Since the prices of all assets are quoted in cash, the first elements of $\vec{v}_t$, 
$\vec{v}^{(\mathrm{hi})}_t$ and $\vec{v}^{(\mathrm{lo})}_t$ are  
always one, that is $v_{0,t}^{(\mathrm{hi})}=v_{0,t}^{(\mathrm{lo})}=v_{0,t}= 1,\,\forall t$. In the experiments of 
this paper, the cash is Bitcoin. 

For continuous markets, elements of $\vec{v}_{t}$ are the opening prices for Period $t+1$ as well as the closing prices for Period $t$.
The \emph{price relative vector} of the $t$th trading period, $\vec y_{t}$, is defined as the element-wise division of $\vec v_{t}$ by $\vec v_{t-1}$:
\begin{equation}
	\vec{y}_{t} := \vec{v}_{t} \oslash \vec{v}_{t-1} =
\left( 1, \frac{ v_{1,t}}{v_{1,t-1}}, \frac{v_{2,t}}{v_{2,t-1}}, ..., \frac{v_{m,t}}{v_{m,t-1}} \right)^\intercal.
\label{eq:y}
\end{equation}
The elements of $\vec{y}_{t}$ are the quotients of closing prices and opening prices for individual asset in the period. 
The price relative vector can be used to calculate the change in total portfolio value in a period. 
If $p_{t-1}$ is the portfolio value at the begining of Period $t$, ignoring transaction cost,
\begin{equation} \label{eq:portfolio_value_t_no_cost}
	p_{t} = p_{t-1} \, \vec{y}_{t} \bcdot \vec{w}_{t-1},
\end{equation}
where $\vec{w}_{t-1}$ is the portfolio weight vector (referred to as the \emph{portfolio vector} from now on) at the beginning of Period $t$, whose $i$th
element, $w_{t-1,i}$,  is the proportion of asset $i$ in the portfolio after capital reallocation. 
The elements of $\vec{w}_t$ always sum up to one by definition, $\sum\limits_i  w_{t,i} =1, \forall t$. The \emph{rate of return} for Period $t$ is then
\begin{equation} \label{eq:rho_t_no_cost}
	\rho_{t} := \frac{p_{t}}{p_{t-1}} -1  = \vec{y}_{t} \bcdot \vec{w}_{t-1} -1,
\end{equation}
and the corresponding \emph{logarithmic rate of return} is
\begin{equation} 
	r_{t} := \ln \frac{p_{t}}{p_{t-1}} = \ln \vec{y}_{t} \bcdot \vec{w}_{t-1}.
	\label{eq:log_r_no_cost}
\end{equation}

In a typical portfolio management problem, the initial portfolio weight vector $\vec{ w}_0$ is chosen to be the first basis vector in the Euclidean space,
\begin{equation}
	\vec  w_0=(1, 0, ...,0)^\intercal,
	\label{eq:w_0}
\end{equation}
indicating all the capital is in the trading currency before entering the market.
If there is no transaction cost, the final portfolio value will be
\begin{equation}\label{eq:final_P_no_cost}
	p_\mathrm{f} = p_0 \exp\left( \sum\limits_{t=1}^{t_\mathrm{f}+1} r_{t} \right) = p_0 \prod_{t=1}^{t_\mathrm{f}+1} \vec{y}_{t} \bcdot \vec{w}_{t-1},
\end{equation}
where $p_0$ is the initial investment amount. The job of a portfolio manager is to maximize $p_\mathrm{f}$ for 
a given time frame. 

\subsection{Transaction Cost}\label{section:transaction}

In a real-world scenario, buying or selling assets in a market is not free. The cost is normally from commission fee.
Assuming a constant commission rate, this section will re-calculate the final portfolio value in Equation~\eqref{eq:final_P_no_cost}, 
using a recursive formula extending a work by \citet{ormos2013}.

The portfolio vector at the beginning of Period $t$ is $\vec{w}_{t-1}$. Due to price movements in the market, at the end of the same
period, the weights evolve into 
\begin{equation} \label{eq:w_prime}
   \vec{w}'_{t} = \frac{ \vec{y}_{t} \odot \vec{w}_{t-1} }
			{ \vec{y}_{t} \bcdot \vec{w}_{t-1} },
\end{equation}
where $\odot$ is the element-wise multiplication.
The mission of the portfolio manager now at the end of Period $t$ is to reallocate portfolio vector from $\vec{w}'_{t}$
to $\vec{w}_{t}$ by selling and buying relevant assets. Paying all commission fees, this reallocation action shrinks
the portfolio value by a factor $\mu_{t}$. $\mu_t \in (0,1]$, and will be called the \emph{transaction remainder factor} from now on.
$\mu_{t}$ is to be determined below.
Denoting $p_{t-1}$ as the portfolio value at the beginning of Period $t$ and $p'_t$ at the end,
\begin{equation} \label{eq:mu_def}
	p_t = \mu_t p'_t. 
\end{equation}
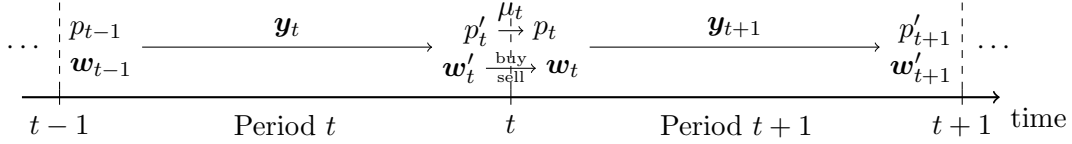
\begin{figure}[!ht]
\centering
\begin{tikzpicture} [scale=1]
	\def\len{13}
	\def\hight{8ex}
	\def\tic{0.1}
	\def\wy{1em}
	\def\py{2.3em}
	\def\offset{0.5}
	\def\T{(\len -2*\offset)/2}
	\def\Pone{{\offset+\T/2}}
	\def\Ptwo{{\offset+3*\T/2}}
	\def\Tone{\offset}
	\def\Ttwo{{\offset+\T}}
	\def\Tthr{{\offset+2*\T}}
	\def\yshift{1.7ex}
	\draw [thick, ->] (0,0) -> (\len,0) node [below right] {time};
	\draw (\Tone,\tic) -- (\Tone,-\tic) node [below] {$t-1$};
	\draw (\Ttwo,\tic) -- (\Ttwo,-\tic) node [below] {$t$};
	\draw (\Tthr,\tic) -- (\Tthr,-\tic) node [below] {$t+1$};
	\node [below] at (\Pone,-\tic) {Period $t$};
	\node [below] at (\Ptwo,-\tic) {Period $t+1$};
	\draw [dashed] (\Ttwo,\tic) -- (\Ttwo,{\tic+\hight});
	\draw [dashed] (\Tone,\tic) -- (\Tone,{\tic+\hight});
	\draw [dashed] (\Tthr,\tic) -- (\Tthr,{\tic+\hight});
	\node [right] at (\Tone, \wy) (wz) {$\vec w_{t-1}$};
	\node [right=2ex] at (\Ttwo, \wy) (wb) {$\vec w_{t}$};
	\node [right] at (\Tone, \py) (pz) {$p_{t-1}$};
	\node [right=1ex] at (\Ttwo, \py) (pb) {$p_{t}$};
	\node [left] at (\Tthr, \wy) (wc) {$\vec w'_{t+1}$};
	\node [left] at (\Tthr, \py) {$p'_{t+1}$};
	\node [left=2ex] at (\Ttwo, \wy) (wa) {$\vec w'_{t}$};
	\node [left=1ex] at (\Ttwo, \py) (pa) {$p'_{t}$};
	\draw [->] (pa) -- node [above] {$\mu_t$} ++ (pb);
	\draw [->] (wa) -- node [above=-3pt] {\tiny buy}node [below=-3pt] {\tiny sell} ++ (wb);
	\draw [->, transform canvas={yshift=\yshift}] (wz) -- node [above] {$\vec y_{t}$} ++ (wa);
	\draw [->, transform canvas={yshift=\yshift}] (wb) -- node [above] {$\vec y_{t+1}$} ++ (wc);
	\node [left of= wz, transform canvas={yshift=\yshift}] {$\cdots$};
	\node [right of= wc, transform canvas={yshift=\yshift}] {$\cdots$};
\end{tikzpicture}
\caption{Illustration of the effect of transaction remainder factor $\mu_t$. The market movement during Period $t$, represented by the price-relative
vector $\vec y_t$, drives the portfolio value and portfolio weights from $p_{t-1}$ and $\vec w_{t-1}$ to $p'_t$ and $\vec w'_t$. The asset selling
and purchasing action at time $t$ redistributes the fund into $\vec w_t$. As a side-effect, these transactions shrink the portfolio to $p_t$
by a factor of $\mu_t$. The rate of return for Period $t$ is calculated using portfolio values at the beginning of the two consecutive periods in 
Equation~\eqref{eq:rho_t}.}
\label{fig:mu}
\end{figure}

The rate of return \eqref{eq:rho_t_no_cost} and logarithmic rate of return \eqref{eq:log_r_no_cost} are now
\begin{align}
	\rho_{t} &= \frac{p_{t}}{p_{t-1}} - 1 = \frac{ \mu_{t} p'_t  }{ p_{t-1} } -1 
	= \mu_t \vec y_{t} \bcdot \vec w_{t-1} -1, 
	\label{eq:rho_t} \\
	r_{t} &= \ln \frac{p_{t}}{p_{t-1}} 
	= \ln \left( \mu_t \vec y_{t} \bcdot \vec w_{t-1} \right),
	\label{eq:r_t}
\end{align}
and the final portfolio value in Equation~\eqref{eq:final_P_no_cost} becomes
\begin{equation}\label{eq:final_P}
	p_\mathrm{f} = p_0 \exp\left( \sum\limits_{t=1}^{t_\mathrm{f}+1} r_{t} \right)
	= p_0 \prod_{t=1}^{t_\mathrm{f}+1} \mu_{t} \vec{y}_{t} \bcdot \vec{w}_{t-1}.
\end{equation}
Different from Equation~\eqref{eq:log_r_no_cost} and~\eqref{eq:portfolio_value_t_no_cost} where transaction cost is not considered, in Equation~\eqref{eq:r_t}
and~\eqref{eq:final_P}, $p'_t \neq p_t$ and the difference between the two values is where the transaction remainder factor comes into play.
Figure~\ref{fig:mu} demonstrates the relationship among portfolio vectors and values and their dynamic relationship on a time axis.

The remaining problem is to determine this transaction remainder factor $\mu_{t}$. During the portfolio reallocation from $\vec{w}'_{t}$ to $\vec{w}_{t}$,
some or all amount of asset $i$ need to be sold, if $p'_t w'_{t,i} > p_t w_{t,i}$ or $w'_{t,i} > \mu_{t} w_{t,i}$.
The total amount of cash obtained by all selling is 
\begin{equation} \label{eq:selling}
    (1-c_\mathrm{s}) p'_t \sum_{i=1}^m (w'_{t,i} - \mu_{t} w_{t,i})^+ 
\end{equation}
where $0\leqslant c_\mathrm{s} <1$ is the commission rate for selling, and $(\vec{v})^+=\mathrm{ReLu}(\vec{v})$ is the element-wise rectified linear function,
$(x)^+ = x$ if $x>0$, $(x)^+ =0$ otherwise.
This money and the original cash reserve $p'_t w'_{t,0}$ taken away the new reserve  $\mu_{t} p'_t w_{t,0}$ will be used to buy new assets,
\begin{equation} \label{eq:purchasing}
   (1-c_\mathrm{p})
   \left[ w'_{t,0} + (1-c_\mathrm{s}) \sum_{i=1}^m (w'_{t,i} - \mu_{t} w_{t,i})^+ - \mu_{t} w_{t,0} \right]
   = \sum_{i=1}^m (\mu_{t} w_{t,i} - w'_{t,i})^+,
\end{equation}
where  $0\leqslant c_\mathrm{p} <1$ is the commission rate for purchasing, and $p'_t$ has been canceled out on both sides.
Using identity
$
   (a-b)^+ - (b-a)^+ = a-b,
$
and the fact that
$
    w'_{t,0} + \sum\limits_{i=1}^m w'_{t,i} = 1 = w_{t,0} + \sum\limits_{i=1}^m w_{t,i},
$
Equation~\eqref{eq:purchasing} is simplified to
\begin{equation} \label{eq:mu_iteration}
    \mu_{t} = 
    	    \frac{1}
	     { 1 - c_\mathrm{p} w_{t,0} }
	    \left[
	    1 - c_\mathrm{p} w'_{t,0} - (c_\mathrm{s}+c_\mathrm{p} - c_\mathrm{s}c_\mathrm{p})
	     \sum_{i=1}^m (w'_{t,i} - \mu_{t} w_{t,i})^+ 
     \right].
\end{equation}
The presence of $\mu_{t}$ inside a linear rectifier means $\mu_{t}$ is not solvable analytically, but it can only be solved
iteratively.
\begin{restatable}{thm}{convergence}
	\label{thm:convergence}
	Denoting 
	\[
	f(\mu) := 
    	    \frac{1}
	     { 1 - c_\mathrm{p} w_{t,0} }
	    \left[
	    1 - c_\mathrm{p} w'_{t,0} - (c_\mathrm{s}+c_\mathrm{p} - c_\mathrm{s}c_\mathrm{p})
	     \sum_{i=1}^m (w'_{t,i} - \mu w_{t,i})^+ 
        \right],
        \]
	the sequence $\left\{\tilde{\mu}^{(k)} _t\right\}$, defined as
	\begin{equation} \label{eq:sequence}
		\left\{\tilde{\mu}^{(k)} _t \left| \tilde{\mu}^{(0)} _t = \mu_\odot \;\mathrm{and}\; 
			\tilde{\mu}^{(k)} _t = f\left(\tilde{\mu}^{(k-1)} _t\right),\;
		k \in \mathbb{N}_0 \right. \right\}
	\end{equation}
	converges to $\mu_{t}$, the solution to Equation \eqref{eq:mu_iteration}, for any $\mu_\odot \in [0,1]$.
\end{restatable}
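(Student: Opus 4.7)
The plan is to view the iteration as Picard iteration for the map $f$ and apply Banach's fixed-point theorem; once $f$ is shown to be a strict contraction on $\mathbb{R}$, the unique fixed point coincides with the $\mu_t$ solving \eqref{eq:mu_iteration}, and the iterates converge geometrically from any initial point $\mu_\odot \in [0,1]$.

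The first step is to bound the Lipschitz constant of $f$. Since each term $u \mapsto (w'_{t,i} - u\,w_{t,i})^+$ is the rectifier of an affine function of slope $-w_{t,i} \le 0$, it is non-increasing and Lipschitz with constant $w_{t,i}$; more precisely, for $\mu_1 \le \mu_2$,
\[
0 \le (w'_{t,i} - \mu_1 w_{t,i})^+ - (w'_{t,i} - \mu_2 w_{t,i})^+ \le (\mu_2 - \mu_1)\, w_{t,i}.
\]
Summing over $i$, using $\sum_{i=1}^{m} w_{t,i} = 1 - w_{t,0}$, and tracking the prefactor in the definition of $f$ yields $|f(\mu_2) - f(\mu_1)| \le L\,|\mu_2 - \mu_1|$ with $L = \dfrac{(c_\mathrm{s} + c_\mathrm{p} - c_\mathrm{s} c_\mathrm{p})(1 - w_{t,0})}{1 - c_\mathrm{p}\, w_{t,0}}$.

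The main obstacle is verifying the strict inequality $L < 1$; this is where the specific structure of the commission rates must be used. The key algebraic identity is $1 - (c_\mathrm{s} + c_\mathrm{p} - c_\mathrm{s} c_\mathrm{p}) = (1 - c_\mathrm{s})(1 - c_\mathrm{p})$, which is strictly positive because $c_\mathrm{s}, c_\mathrm{p} \in [0,1)$. Rearranging $L < 1$ reduces to $-(1 - c_\mathrm{s})(1 - c_\mathrm{p}) < c_\mathrm{s}(1 - c_\mathrm{p})\, w_{t,0}$, whose left side is strictly negative and right side is non-negative, so $L < 1$ holds. With the strict contraction in hand, Banach's theorem applied on the complete metric space $\mathbb{R}$ (the restriction $\mu_\odot \in [0,1]$ plays no role beyond being a convenient physical range) supplies a unique fixed point $\mu_t$ together with the geometric estimate $|\tilde{\mu}^{(k)}_t - \mu_t| \le L^{k}\,|\mu_\odot - \mu_t| \to 0$, which completes the proof.
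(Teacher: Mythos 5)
Your proof is correct, and it takes a genuinely different route from the paper's. The paper proceeds via monotonicity of $f$ (Lemma~\ref{lem:mono}), the endpoint bounds $f(0)>0$ and $f(1)\leqslant 1$ (Lemmas~\ref{lem:lower_bound} and~\ref{lem:upper_bound}), the Monotone Convergence Theorem for the two sequences started at $\mu_\odot=0$ and $\mu_\odot=1$, and finally a squeeze argument interleaving a general starting point between iterates of the sequence started at an endpoint. You instead exhibit an explicit Lipschitz constant
\[
L=\frac{(c_\mathrm{s}+c_\mathrm{p}-c_\mathrm{s}c_\mathrm{p})(1-w_{t,0})}{1-c_\mathrm{p}w_{t,0}},
\]
verify $L<1$ via the factorization $1-(c_\mathrm{s}+c_\mathrm{p}-c_\mathrm{s}c_\mathrm{p})=(1-c_\mathrm{s})(1-c_\mathrm{p})>0$, and invoke Banach's fixed-point theorem; I checked the rearrangement of $L<1$ to $-(1-c_\mathrm{s})(1-c_\mathrm{p})<c_\mathrm{s}(1-c_\mathrm{p})w_{t,0}$ and it is right. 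Your argument buys several things the paper's does not: it proves \emph{uniqueness} of the fixed point, which the paper's Lemma~\ref{lem:mu0} implicitly relies on when it identifies the limit $\mu^*$ with $\mu_t$; it yields the geometric rate $L^k|\mu_\odot-\mu_t|$, sharpening the paper's informal remark that convergence is faster when $|\mu_t-\mu_\odot|$ is small; and it needs no restriction on the commission rates beyond $c_\mathrm{s},c_\mathrm{p}\in[0,1)$, whereas the paper's Lemma~\ref{lem:lower_bound} has to appeal to rates being below $38\%$. What the paper's monotone approach buys in exchange is two-sided bracketing: the sequences from $0$ and $1$ give computable lower and upper bounds on $\mu_t$ at every iteration, which is convenient for the tolerance-based stopping rule used in the back-tests.
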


While this convergence is not stated in \citet{ormos2013}, its proof will be given in Appendix~\ref{section:proof}.
This theorem provides a way to approximate the transaction remainder factor $\mu_{t}$ to an arbitrary accuracy.
The speed on the convergence depends on the error of the initial guest $\mu_\odot$. The smaller $\left| \mu_t - \mu_\odot \right|$ is, the quicker 
Sequence~\eqref{eq:sequence} converges to $\mu_t$. When $c_\mathrm{p} = c_\mathrm{s} = c$, there is a practice \cite{moody1998performance} to approximate $\mu_t$ with
$c \sum\limits_{i=1}^m | w'_{t,i} - w_{t,i} |$. Therefore, in this work, $\mu_\odot$ will use this as the first value for the sequence, that
\begin{equation} \label{eq:mu_approx}
	\mu_\odot = c \sum\limits_{i=1}^m \left| w'_{t,i} - w_{t,i} \right|.
\end{equation}
In the training of the neural networks,
$\tilde{\mu}^{(k)} _t$ with a fixed $k$ in \eqref{eq:sequence} is used. In the back-test experiments, a tolerant error $\delta$ 
dynamically determines $k$, that is the first $k$, such that $\left| \tilde{\mu}^{(k)} _t - \tilde{\mu}^{(k-1)} _t \right| < \delta$,
is used for $\tilde{\mu}^{(k)} _t$ to approximate $\mu_{t}$. In general, $\mu_{t}$ and its approximations are functions of portfolio vectors of two recent periods 
and the price relative vector,
\begin{equation} \label{eq:mu_as_func}
	\mu_{t} = \mu_{t}(\vec{w}_{t-1}, \vec{w}_{t}, \vec{y}_{t}).
\end{equation}
Throughout this work, a single constant commission rate for both selling and purchasing for
all non-cash assets is used, $c_\mathrm{s} = c_\mathrm{p} = 0.25\%$, the maximum rate at Poloniex.

The purpose of the algorithmic agent is to generate a time-sequence of portfolio vectors $\{ \vec  w_1, \vec  w_2, \cdots, \vec  w_t, \cdots \}$
in order to maximize the accumulative capital in \eqref{eq:final_P}, taking transaction cost into account. 

\subsection{Two Hypotheses}\label{section:hypotheses}
In this work, back-test tradings are only considered, where the trading agent pretends to be back in time
at a point in the market history, not knowing any "future" market information, and does paper trading from then onward.
As a requirement for the back-test experiments, the following two assumptions are imposed:
\begin{enumerate}
\item Zero slippage: The liquidity of all market assets is high enough that, each trade can be carried out immediately at the last price
	when a order is placed.
\item Zero market impact: The capital invested by the software trading agent is so insignificant that is has no influence on the market.
\end{enumerate}
In a real-world trading environment, if the trading volume in a market is high enough, these two assumptions are near to reality.

\section{Data Treatments} \label{section:data}
The trading experiments are done in the exchange Poloniex, where there are about 80 tradable cryptocurrency
pairs with about 65 available cryptocurrencies\footnote{as of May 23, 2017.}. 
However, for the reasons given below, only a subset of coins is considered by the trading robot in one period.
Apart from coin selection scheme, this section also gives a description of the data 
structure that the neural networks take as their input, a normalization pre-process, and a scheme to deal with missing data.

\subsection{Asset Pre-Selection} \label{section:preselection}
In the experiments of the paper, the 11 most-volumed non-cash assets are preselected for the portfolio. 
Together with the cash, Bitcoin, the size of the portfolio, $m+1$, is $12$. This number is chosen by experience and 
can be adjusted in future experiments. For markets with large volumes, like the foreign exchange market, $m$ can be as big as
the total number of available assets.

One reason for selecting top-volumed cryptocurrencies (simply called coins below) is that
bigger volume implies better market liquidity of an asset.
In turn it means the market condition is closer to Hypothesis 1 set in Section~\ref{section:hypotheses}.
Higher volumes also suggest that the investment can have less influence on the market, establishing an environment closer to the Hypothesis 2.
Considering the relatively high trading frequency (30 minutes) compared to some daily trading algorithms, liquidity and market size
are particularly important in the current setting.
In addition, the market of cryptocurrency is not stable.
Some previously rarely- or popularly-traded coins can have sudden boost or drop in volume in a short period of time.
Therefore, the volume for asset preselection is of a longer time-frame, relative to the trading period.
In these experiments, volumes of 30 days are used.

However, using top volumes for coin selection in back-test experiments can give rise to a \textit{survival bias}.
The trading volume of an asset is correlated to its popularity, which in turn is governed by its historic
performance. Giving future volume rankings to a back-test, will inevitably and indirectly pass future price information to
the experiment, causing unreliable positive results.
For this reason, volume information just before the beginning of the back-tests is taken for preselection
to avoid survival bias. 

\subsection{Price Tensor} \label{section:price_tensor}
Historic price data is fed into a neural network to generate the output of a portfolio vector. This subsection describes the structure
of the input tensor, its normalization scheme, and how missing data is dealt with.

The input to the neural networks at the end of Period $t$ is a tensor, $\bm{X}_t$, of rank 3 with shape $(f,n,m)$, where $m$ is the number of preselected non-cash assets,
$n$ is the number of input periods before $t$, and $f=3$ is the feature number. Since prices further back in the history have much less correlation
to the current moment than that of recent ones, $n=50$ (a day and an hour) for the experiments. The criterion of choosing the $m$ assets were given in
Section~\ref{section:preselection}. Features for asset $i$ on Period $t$ are its closing, highest,
and lowest prices in the interval. Using the notations from Section~\ref{section:math}, these are $v_{i,t}$, $v_{i,t}^\mathrm{(hi)}$,
and $v_{i,t}^\mathrm{(lo)}$.
However, these absolute price values are not directly fed to the networks. Since only the changes in prices will determine the performance
of the portfolio management (Equation~\eqref{eq:r_t}), all prices in the input tensor will be normalization by the latest
closing prices. Therefore, $\bm{X}_t$ is the stacking of the three normalized price matrices,
\begin{equation}
    \bm{X}_t = 
        \begin{tikzpicture}[baseline=-9ex,scale=1,every node/.style={draw,fill=white,opacity=0.8,minimum size=5em,inner sep=1}]
            \def\xs{1.85} 
            \def\ys{-0.0} 
    	    \node (mA)
    	        { $\bm{V}_t^\mathrm{(lo)}$ };
            \node (mB) at ($(mA.south west)+(\xs,\ys)$)
    	        { $\bm{V}_t^\mathrm{(hi)}$ };
            \node (mC) at ($(mB.south west)+(\xs,\ys)$)
    	        { $\bm{V}_t$ };
            \draw[dashed](mA.north east)--(mC.north east);
            \draw[dashed](mA.north west)--(mC.north west);
            \draw[dashed](mA.south west)--(mC.south west);
        \end{tikzpicture}.
    \label{eq:price_tensor}
\end{equation}
where $\bm{V}_t$, $\bm{V}_t^\mathrm{(hi)}$, and $\bm{V}_t^\mathrm{(lo)}$ are the normalized price matrices,
\begin{align}
    \bm{V}_t &= \bigg[
        \vec{v}_{t-n+1} \oslash  \vec{v}_{t} \Big|
        \vec{v}_{t-n+2} \oslash  \vec{v}_{t} \Big|
	\cdots \Big|
        \vec{v}_{t-1} \oslash  \vec{v}_{t} \Big|
	\vec{1}
    \bigg], \nonumber \\
    \bm{V}_t^\mathrm{(hi)} &= \bigg[
        \vec{v}_{t-n+1}^\mathrm{(hi)} \oslash  \vec{v}_{t} \Big|
        \vec{v}_{t-n+2}^\mathrm{(hi)} \oslash  \vec{v}_{t} \Big|
	\cdots \Big|
        \vec{v}_{t-1}^\mathrm{(hi)} \oslash  \vec{v}_{t} \Big|
        \vec{v}_{t}^\mathrm{(hi)} \oslash  \vec{v}_{t}
    \bigg], \nonumber \\
    \bm{V}_t^\mathrm{(lo)} &= \bigg[
        \vec{v}_{t-n+1}^\mathrm{(lo)} \oslash  \vec{v}_{t} \Big|
        \vec{v}_{t-n+2}^\mathrm{(lo)} \oslash  \vec{v}_{t} \Big|
	\cdots \Big|
        \vec{v}_{t-1}^\mathrm{(lo)} \oslash  \vec{v}_{t} \Big|
        \vec{v}_{t}^\mathrm{(lo)} \oslash  \vec{v}_{t}
    \bigg], \nonumber
\end{align}
with $\vec{1} = \left( 1,1,\cdots,1  \right)^\intercal$, and $\oslash$ being the element-wise division operator.

At the end of Period $t$, the portfolio manager comes up with a portfolio vector $\vec{w}_t$ using merely the information from the price tensor $\bm{X}_t$
and the previous portfolio vector $\vec{w}_{t-1}$,
according to some policy $\pi$. In other words, $\vec{w}_t = \pi(\bm{X}_t, \vec{w}_{t-1})$. At the end of Period $t+1$, the logarithmic rate of return
for the period due to 
decision $\vec{w}_t$ can be calculated with the additional information from the price change vector $\vec{y}_{t+1}$, using Equation~\eqref{eq:r_t}, 
$r_{t+1} = \ln( \mu_{t+1} \vec{y}_{t+1} \bcdot \vec{w}_t ).$
In the language of RL, $r_{t+1}$ is the immediate reward to the portfolio management agent for its action $\vec{w}_t$ under environment condition $\bm{X}_t$.

\subsection{Filling Missing Data} \label{section:missing_data}
Some of the selected coins lack part of the history. This absence of data is due to the fact that these coins just appeared relatively recently.
Data points before the existence of a coin are marked as Not A Numbers (NANs) from the exchange. NANs only appeared in the training set,
because the coin selection criterion is the volume-ranking of the last 30 days before the back-tests, meaning all assets must have existed before that.

As the input of a neural network must be real numbers, these NANs have to be replaced. 
In a previous work of the authors \citep{jiang2016}, the missing data was filled with fake decreasing price series with a decay rate of 0.01,
in order for the neural networks to avoid picking these absent assets in the training process.
However, it turned out that the networks deeply remembered these particular assets, that they avoided them even when they were in very promising up-climbing 
trends in the back-test experiments.
For this reason, in this current work, flat fake price-movements (0 decay rates) are used to fill the missing data points.
In addition, under the novel EIIE structure, the new networks will not be able to reveal the identity of individual assets, preventing
them from making decision based on the long-past bad records of particular assets.

\section{Reinforcement Learning}\label{section:rl}
With the problem defined in Section~\ref{section:problem} in mind, this section presents a reinforcement-learning (RL) solution framework using a 
deterministic policy gradient algorithm. The explicit reward function is also given under this framework.

\subsection{The Environment and the Agent}
In the problem of algorithmic portfolio management, the agent is the software portfolio manager performing trading-actions in the environment of a financial market. 
This environment comprises of all available assets in the markets and the expectations of all market participants towards them.

It is impossible for the agent to get total information of a state of such a large and complex environment. Nonetheless, all relevant information is believed,
in the philosophy of technical traders \cite{charles2006technical,lo2000foundations}, to be reflected in the prices of the assets,
which are publicly available to the agent. Under this point of view, an environmental state can be roughly represented by the prices of all orders throughout
the market's history up to the moment where the state is at. Although full order history is in the public domain for many financial markets,
it is too huge a task for the software agent to practically process this information. As a consequence, sub-sampling schemes for the order-history
information are employed to future simplify the state representation of the market environment. These schemes include asset preselection described in 
Section~\ref{section:preselection}, periodic feature extraction and
history cut-off. Periodic feature extraction discretizes the time into periods, and then extract the highest, lowest, and closing prices in each periods. 
History cut-off simply takes the price-features of only a recent number of periods to represent the current state of the environment. The resultant 
representation is the price tensor $\bm{X}_t$ described in Section~\ref{section:price_tensor}.

Under Hypothesis 2 in Section~\ref{section:hypotheses}, the trading action of the agent will not influence the future price states of the market.
However, the action made at the beginning of Period $t$ will affect the reward of Period $t+1$, and as a result will affect the decision of its action.
The agent's buying and selling transactions made at the beginning of Period $t+1$, aiming to redistribute the wealth among the assets, 
are determined by the difference between portfolio weights $\vec w'_{t}$ and $\vec{w}_t$.  $\vec w'_{t}$ is defined in term of $\vec w_{t-1}$
in Equation~\eqref{eq:w_prime}, which also plays a role in the action for the last period. 
Since $\vec w_{t-1}$ has already been determined in the last period
the action of the agent at time $t$ can be represented solely by the portfolio vector $\vec w_t$,
\begin{equation}  \label{eq:action}
	\vec a_t = \vec{w_t}.
\end{equation}
 Therefore a previous action does
 have influence on the decision of the current one through the dependency of $r_{t+1}$ and $\mu_{t+1}$ on $\vec w_{t}$ \eqref{eq:mu_as_func}.
In the current framework, this influence is encapsulated by considering $\vec w_{t-1}$ as a part of the environment and inputting it
to the agent's action making policy, so the state at $t$ is represented as the pair of $X_t$ and $\vec w_{t-1}$,
\begin{equation} \label{eq:state}
	\vec s_t = (\bm X_t, \vec w_{t-1}),
\end{equation}
where $\vec w_0$ is predetermined in \eqref{eq:w_0}. The state $\bm s_t$  consists of two parts, the external state represented by the price tensor,
$\bm X_t$, and the internal state represented by the portfolio vector from the last period, $\vec w_{t-1}$. Because under Hypothesis 2 of
Section~\ref{section:hypotheses}, the portfolio amount is negligible compared to the total trading volume of the market, $p_t$ is not included in
the internal state. 

\subsection{Full-Exploitation and the Reward Function}
It is the job of the agent to maximize the final portfolio value $p_\mathrm{f}$ of Equation~\eqref{eq:final_P} at the end of the $t_\mathrm{f}+1$ period.
As the agent does not have control over the choices of the initial investment, $p_0$, and the length of the whole portfolio management process, $t_\mathrm{f}$,
this job is equivalent to maximizing the average logarithmic cumulated return $R$,
\begin{align}
	R(\vec s_1,\vec a_1,\cdots,\vec s_{t_\mathrm{f}},\vec a_{t_\mathrm{f}},\vec s_{t_\mathrm{f}+1})
	&:= \frac{1}{t_\mathrm{f}} \ln \frac{p_\mathrm{f}}{p_0}
	= \frac{1}{t_\mathrm{f}} \sum_{t=1}^{t_\mathrm{f}+1} \ln \left(\mu_t \vec y_{t} \bcdot \vec w_{t-1} \right) \label{eq:mean_log_ret} \\
	&= \frac{1}{t_\mathrm{f}} \sum_{t=1}^{t_\mathrm{f}+1} r_{t}. \label{eq:imm_ret}
\end{align}
On the right-hand side of \eqref{eq:mean_log_ret}, $\vec w_{t-1}$ is given by action $\vec a_{t-1}$,  $\vec y_{t}$ is part of price tensor
$\bm X_{t}$ from state variable $\vec s_{t}$, and $\mu_t$ is a function of $\vec w_{t-1}$, $\vec w_{t}$ and $\vec y_{t}$ as stated in \eqref{eq:mu_as_func}.
In the language of RL, $R$ is the cumulated reward, and $r_t/t_\mathrm{f}$ is the immediate reward for an individual episode.
Different from a reward function using accumulated portfolio value \cite{moody1998performance}, the denominator 
$t_\mathrm{f}$ guarantees the fairness of the reward function between runs of different lengths, enabling it to train the trading policy in mini-batches.

With this reward function, the current framework has two important distinctions from many other RL problems. One is that both the episodic and 
cumulated rewards are exactly expressed. In other words, the domain knowledge of the environment is well-mastered, and can be fully exploited by the agent.
This exact expressiveness is based upon
Hypothesis 1 of Section~\ref{section:hypotheses} that an action has no influence on the external part of future states, the price tensor. This isolation
of action and external environment also allows one to use the same segment of market history to evaluate difference sequences of actions. This feature of 
the framework is considered a major advantage, because a complete new trial in a trading game is both time-consuming and expansive.

The second distinction is that all episodic rewards are equally important to the final return. This distinction, together with the zero-market-impact assumption, 
allows $r_t/t_\mathrm{f}$ to be regarded as the action-value function of action $\vec w_t$ with a discounted factor of $0$, taking no consideration
of future influence of the action.  Having a definite action-value function further justifies the full-exploitation approach, since exploration in other 
RL problems is mainly for trying out different classes of action-value functions.

Without exploration, on the other hand, local optima can be avoided by random initialisation of the policy parameters which will be discussed below.

\subsection{Deterministic Policy Gradient}
A policy is a mapping from the state space to the action space, $\pi:\mathcal{S}\rightarrow\mathcal{A}$. With full exploitation in the current framework,
an action is deterministically produced by the policy from a state. The optimal policy is obtained using a gradient ascent algorithm. To achieve this,
a policy is specified by a set of parameter $\vec \theta$, and $\vec a_t = \pi_{\vec \theta}(\bm s_t)$. The performance
metric of $\pi_{\vec \theta}$ for time interval $[0,t_\mathrm{f}]$ is defined as the corresponding reward function \eqref{eq:mean_log_ret} of the interval,
\begin{equation}
	J_{[0,t_\mathrm{f}]}(\pi_{\vec \theta}) = R\left( \vec s_1,\pi_{\vec \theta}(s_1),\cdots,
		\vec s_{t_\mathrm{f}},\pi_{\vec \theta}(s_{t_\mathrm{f}}),\vec s_{t_\mathrm{f}+1} \right).
	\label{eq:policy_value}
\end{equation}
After random initialisation, the parameters are continuously updated along the gradient direction with a learning rate $\lambda$,
\begin{equation}
	\vec\theta \longrightarrow \vec\theta + \lambda\nabla_{\vec\theta}J_{[0,t_\mathrm{f}]}(\pi_{\vec \theta}).
	\label{eq:gradient_ascent}
\end{equation}
To improve training efficiency and avoid machine-precision errors, $\vec\theta$ will be updated upon mini-batches instead of the whole training market-history.
If the time-range of a mini-batch is $[t_{\mathrm{b}_1}, t_{\mathrm{b}_2}]$, the updating rule for the batch is
\begin{equation}
	\vec\theta \longrightarrow \vec\theta + \lambda\nabla_{\vec\theta}J_{[t_{\mathrm{b}_1}, t_{\mathrm{b}_2}]}(\pi_{\vec \theta}),
\end{equation}
with the denominator in the corresponding $R$ defined in \eqref{eq:mean_log_ret} replaced by $t_{\mathrm{b}_2} - t_{\mathrm{b}_1}$.
This mini-batch approach of gradient ascent also allows online learning, which is important in online trading where new market history keep coming to the agent.
Details of the online learning and mini-batch training will be discussed in Section~\ref{section:training}

\section{Policy Networks} \label{section:networks}

The policy functions $\pi_{\vec\theta}$ will be constructed using three different deep neural networks.
The neural networks in this paper differ from a previous version \cite{jiang2016} with three important innovations,
the mini-machine topology invented to target the portfolio management problem, the portfolio-vector memory,
and a stochastic mini-batch online learning scheme.

\subsection{Network Topologies} \label{section:topology}

The three incarnations of neural networks to build up the policy functions are a CNN, a basic RNN, and a LSTM. Figure~\ref{fig:cnn_topology}
shows the topology of a CNN designed for solving the current portfolio management problem, while Figure~\ref{fig:rnn_topology} portrays the structure
of a basic RNN or LSTM network for the same problem. In all cases, the input to the networks is the price tensor $\bm{X}_t$ defined in
\eqref{eq:price_tensor}, and the output is the portfolio vector $\vec w_t$. In both figures, an hypothetical example of output portfolio vector is used,
while the dimension of the price tensor and thus the number of assets are actual values deployed in the experiments. 
The last hidden layers are the voting scores for all non-cash assets. The softmax outcomes of these scores and a cash bias
become the actual corresponding portfolio weights. In order for the neural network to consider transaction cost, the portfolio vector from the last
period, $\vec w_{t-1}$, is inserted to the networks just before the voting-layer. The actual mechanism of storing and retrieving portfolio vectors
in a parallel manner is presented in Section \ref{section:memory}.

A vital common feature in all three networks is that the networks flow independently for the $m$ assets while network parameters are shared among these
streams. These streams are like independent but identical networks of smaller scopes, separately observing and assessing individual non-cash assets. 
They only interconnect at the softmax function, just to make sure their outputting weights are non-negative and summing up to unity. We call these
streams mini-machines or more formally Identical Independent Evaluators (IIE), and this topology feature Ensemble of IIE (EIIE) nicknamed mini-machine approach,
to distinguish with the wholesome approach in an earlier attempt \cite{jiang2016}.
EIIE is realized differently in Figure~\ref{fig:cnn_topology} and \ref{fig:rnn_topology}. An IIE in Figure~\ref{fig:cnn_topology} is just a chain
of convolution with kernels of height $1$, while in Figure~\ref{fig:rnn_topology} it is either a LSTM or a Basic RNN taking the price history
of a single asset as input.

EIIE greatly improves the performance of the portfolio management. Remembering the historic performance
of individual assets, an integrated network in the previous version is more reluctant to invest money to a historically unfavorable asset, even if
the asset has a much more promising future. On the other hand, without being designed to reveal  
the identity of the assigned asset, an IIE is able to judge its potential rise and fall merely based on more recent events.

From a practical point of view, EIIE
has three other crucial advantages over an integrated network. The first is scalability in asset number.
Having the mini-machines all identical with shared parameters, the training time of an ensemble scales roughly linearly with $m$.
The second advantage is data-usage efficiency. For an interval of price history, a 
mini-machine can be trained $m$ times across different assets. Asset assessing experience of the IIEs is then shared and accumulated in both time and asset dimensions.
The final advantage is plasticity to asset collection. Since an IIE's asset assessing ability is universal without being restricted to any particular
assets, an EIIE can update its choice of assets and/or the size of the portfolio in real-time, without having to train the network again from ground zero.

\begin{figure}[!htb]
    \centering
    \includegraphics[width=\linewidth]{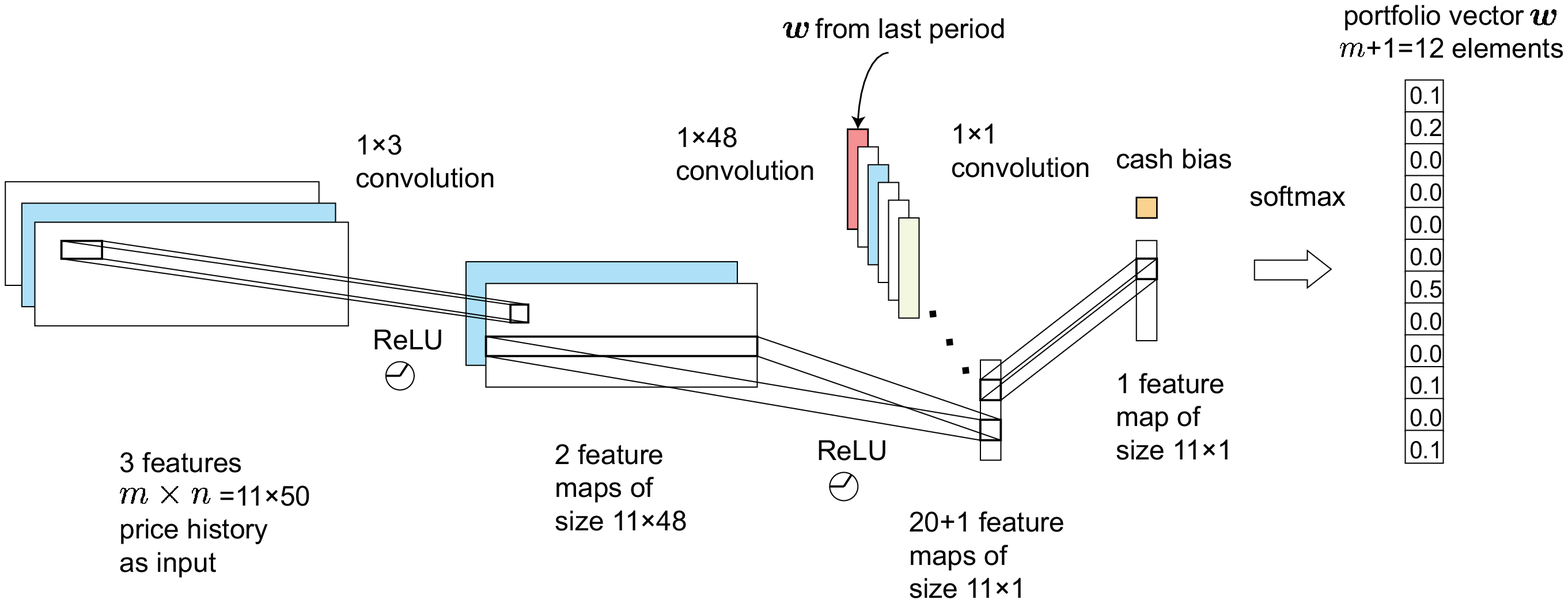}
    \caption{CNN Implementation of the EIIE: This is a realization the Ensemble of Identical Independent Evaluators (EIIE),
    a fully convolutional network. The first dimensions of all the local receptive fields
    in all feature maps are $1$, making all rows isolated from each other until the softmax activation. Apart from 
    weight-sharing among receptive fields in a feature map, which is a usual CNN characteristic, parameters are also shared between rows in an EIIE
    configuration. Each row of the entire network is assigned with a particular asset, and is responsible to submit a voting
    score to the softmax on the growing potential of the asset in the coming trading period. The input to the network is a
    $3\times m\times n$ price tensor, comprising the highest, closing, and lowest prices of $m$ non-cash assets over the past
    $n$ periods. The outputs are the new portfolio weights. The previous portfolio weights are inserted as an extra 
    feature map before the scoring layer, for the agent to minimize transaction cost.
    }
    \label{fig:cnn_topology}
\end{figure}

\begin{figure}[!htb]
    \centering
    \includegraphics[width=\linewidth]{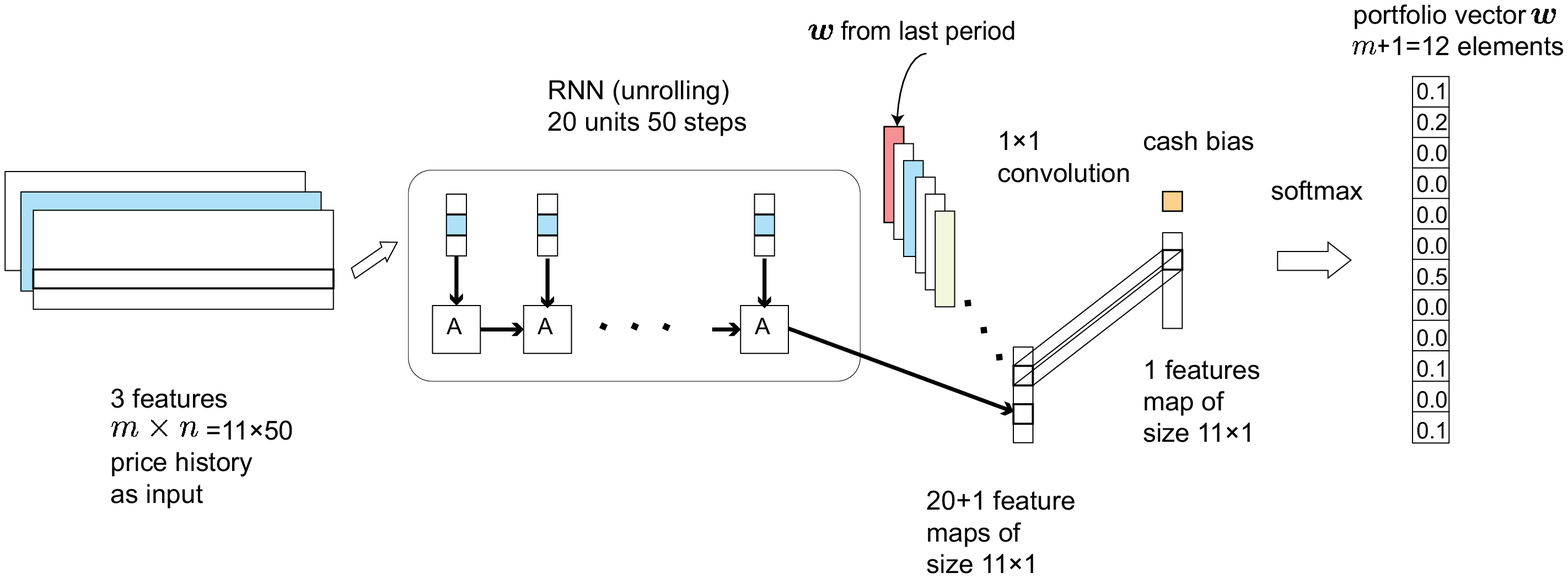}
    \caption{RNN (Basic RNN or LSTM) Implementation of the EIIE:
    This is a recurrent realization the Ensemble of Identical Independent Evaluators (EIIE).
    In this version, the price inputs of individual assets are taken by small recurrent subnets. These subnets
    are identical LSTMs or Basic RNNs. The structure of the ensemble network after the recurrent subnets is the same as
    the second half of the CNN in Figure \ref{fig:cnn_topology}.  
    }
    \label{fig:rnn_topology}
\end{figure}

\subsection{Portfolio-Vector Memory} \label{section:memory}
\begin{figure}[!htb]
	\centering
	\subfigure[Mini-Batch Viewpoint]{
		\includegraphics[width=.47\linewidth]{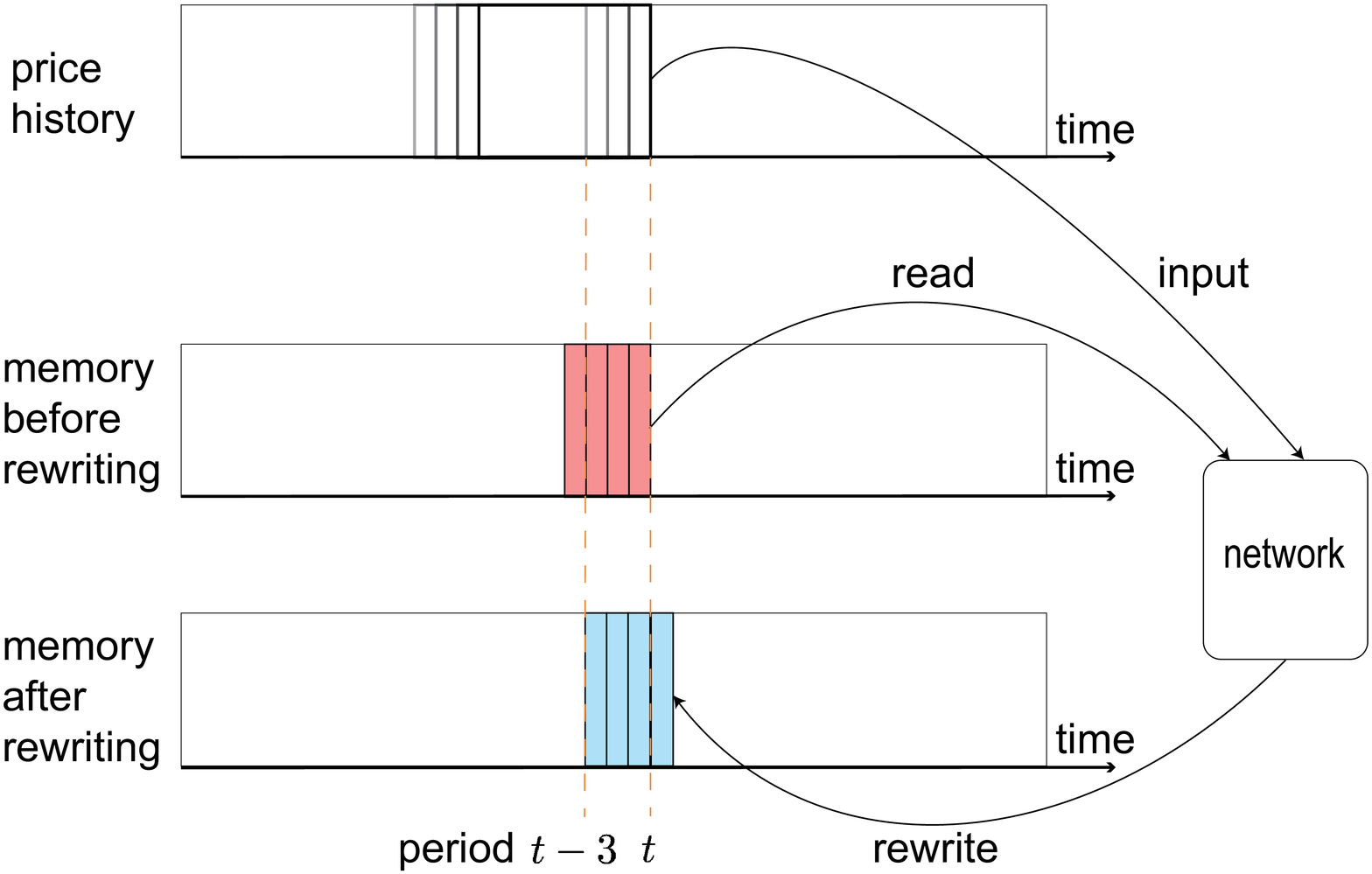}
		\label{fig:macro}
	}
	\subfigure[Network Viewpoint]{
		\includegraphics[width=.47\linewidth]{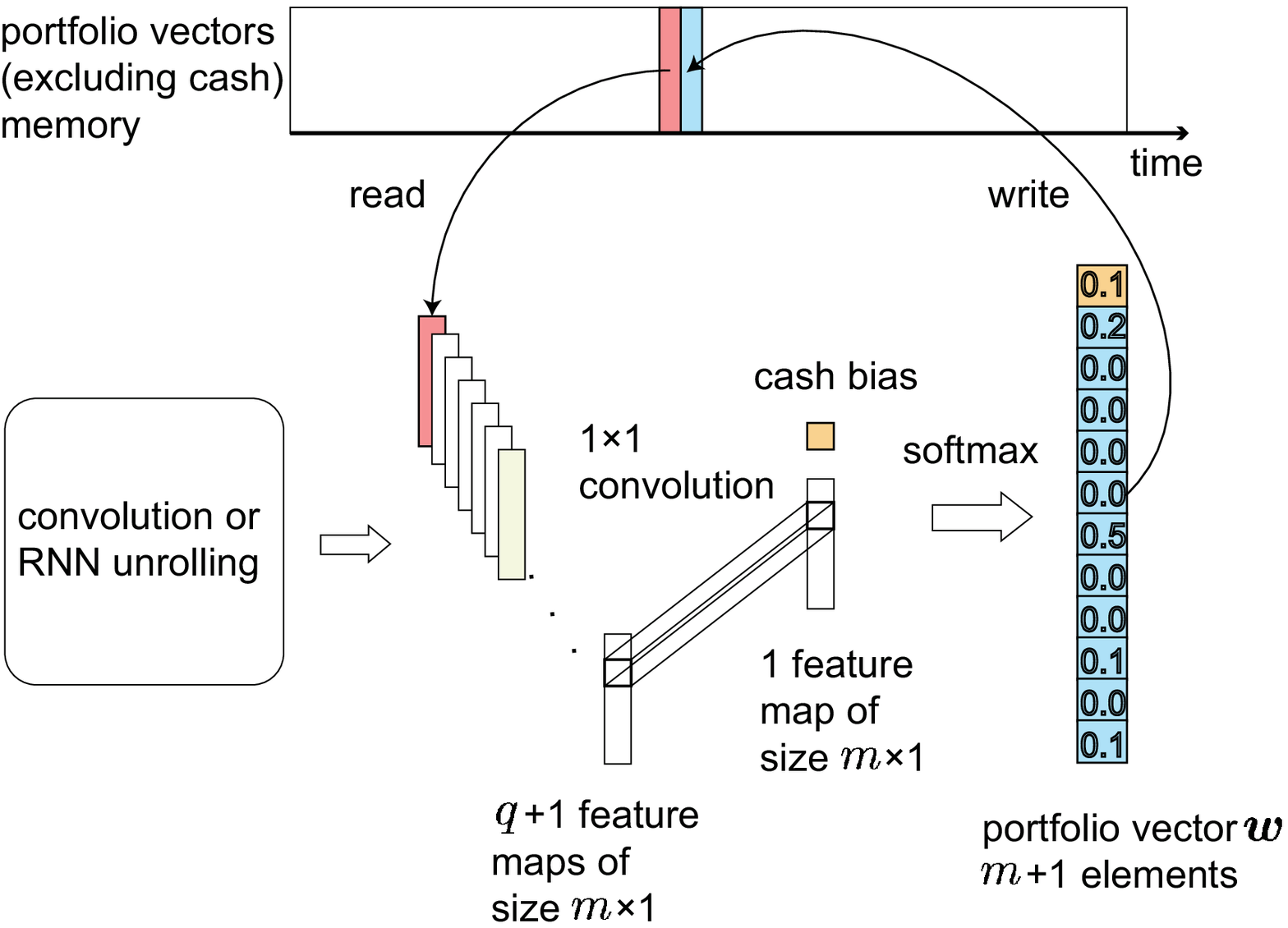}
		\label{fig:micro}
	}
	\caption{A Read/Write Cycle of the Portfolio-Vector Memory: In both graphs, a small vertical strip on the time axis represents a portion of
		the memory containing the portfolio weights at the beginning of a period. Red memories are being read to the policy network, while
		blue ones are being overwritten by the network. The two colored rectangles in \subref{fig:macro} consisting of four strips are
		example of two consecutive mini-batches. While \subref{fig:macro} exhibits a complete read and write circle for a mini-batch,
		\subref{fig:micro} shows a circle within a network (omitting the CNN or RNN part of the network).}
	\label{fig:last_w}
\end{figure}
In order for the portfolio management agent to minimize transaction cost by restraining itself from large changes between consecutive portfolio vectors, 
the output of portfolio weights from the previous trading period is input to the networks. One way
to achieve this is to rely on the remembering ability of RNN, but with this approach the price normalization scheme proposed in \eqref{eq:price_tensor} has
to be abandoned. This normalization scheme is empirically better performing than others. Another possible solution is Direct Reinforcement (RR)
introduced by \citet{moody2001drl}. However, both RR and RNN memory suffer from the gradient vanishing problem. More importantly, RR and RNN require
serialization of the training process, unable to utilize parallel training within mini-batches.

In this work, inspired by the idea of experience replay memory \cite{mnih2016asynchronous}, a dedicated Portfolio-Vector Memory (PVM),
is introduced to store the network outputs. As shown in Figure~\ref{fig:last_w}, the PVM
is a stack of portfolio vectors in chronological order. Before any network training, the PVM is initialized with uniform weights. In each training
step, a policy network loads the portfolio vector of the previous period from the memory location at $t-1$, and overwrites the memory at $t$ with its
output. As the parameters of the policy networks converge through many training epochs, the values in the memory also converge.

Sharing a single memory stack allows a network to be trained simultaneously against data points within a mini-batches, enormously improving training efficiency.
In the case of RNN versions of the networks, inserting last outputs after the recurrent blocks (Figure~\ref{fig:rnn_topology}) avoids passing the
gradients back to the deep RNN structures, circumventing the gradient vanishing problem.

\subsection{Online Stochastic Batch Learning} \label{section:training}
With the introduction of the network output-memory, mini-batch training becomes plausible, although the learning framework requires sequential inputs.
However, unlike supervised learning, where data points are unordered and mini-batches are random disjoint subsets of the training sample space,
in this training scheme the data points within a batch have to be in their time-order. In addition, since data sets are time series,
mini-batches starting with different periods are considered valid and distinctive, even if they have a significantly overlapping interval. For example, if
the uniform batch size is $n_\mathrm{b}$, data sets covering $[t_\mathrm{b}, t_\mathrm{b} + n_\mathrm{b})$ and  
$[t_\mathrm{b}+1, t_\mathrm{b} + n_\mathrm{b}+1)$ are two validly different batches.

The ever-ongoing nature of financial markets means new data keeps pouring into the agent, and as a consequence the size of the of training sample explodes
indefinitely. Fortunately, it is believed that the correlation between two market price events decades exponentially with the temporal distance between them
\cite{holt2004forecasting,charles2006technical}. With this belief, here an Online Stochastic Batch Learning (OSBL) scheme is proposed.

At the end of the $t$th period, the price movement of this period will be added to the training set. After the agent has completed its orders for period $t+1$,  
the policy network will be trained against $N_\mathrm{b}$ randomly chosen mini-batches from this set. A batch starting with period
$t_\mathrm{b} \leqslant t-n_\mathrm{b}$ is picked with a geometrically distributed probability $P_\beta(t_\mathrm{b})$,
\begin{equation} \label{eq:prob_batch}
	P_\beta(t_\mathrm{b}) = \beta (1-\beta)^{t - t_\mathrm{b} - n_\mathrm{b}},
\end{equation}
where $\beta\in(0,1)$ is the probability-decaying rate determining the shape of the probability distribution and how important are recent market events, and
$n_\mathrm{b}$ is the number of periods in a mini-batch.

\section{Experiments} \label{section:experiments}

The tools has been developed to this point of the article are examined in three back-test experiments of different time frames with all three policy networks on the
crypto-currency exchange Poloniex. Results are compared with many well-established and recently published portfolio-selection strategies. The main compared
financial metric is the portfolio value as well as maximum drawdown and the Sharpe ratio.

\subsection{Test Ranges} \label{section:ranges}
\begin{table}[!htb]
    \centering
    \begin{tabular}{c|c|c} 
    Data Purpose & Data Range & Training Data Set \\
    \hline
    CV & 2016-05-07 04:00 to 2016-06-27 08:00& 2014-07-01 to 2016-05-07 04:00 \\
    \hline
    Back-Test 1 & 2016-09-07 04:00 to 2016-10-28 08:00 & 2014-11-01 to 2016-09-07 04:00 \\
    \hline
    Back-Test 2 & 2016-12-08 04:00 to 2017-01-28 08:00 & 2015-02-01 to 2016-12-08 04:00 \\
    \hline
    Back-Test 3 & 2017-03-07 04:00 to 2017-04-27 08:00 & 2015-05-01 to 2017-03-07 04:00 \\
    \end{tabular} 
    \caption{Price data ranges for hyperparameter-selection (cross-validation, CV) and back-test experiments. Prices are accessed in periods of 30 minutes. 
    Closing prices are used for cross validation and back-tests, while highest, lowest, and closing prices in the periods are used for training. 
    The hours of the starting points for the training sets are not given, since they begin at midnight of the days.  All times are in
    UTC.}
    \label{tab:datarange}
\end{table} 
Details of the time-ranges for the back-test experiments and their 
corresponding training sets are presented in Table~\ref{tab:datarange}.  A cross validation set is used for determination of the hyper-parameters,
whose range is also listed. All time in the table are in Coordinated Universal Time (UTC). All training sets start at 0 o'clock. For example,
the training set for Back-Test 1 is from 00:00 on November 1st 2014. 
All price data is accessed with Poloniex's official Application Programming Interface (API)\footnote{https://poloniex.com/support/api/}.

\subsection{Performance Measures}
Different metrics are used to measure the performance of a particular portfolio selection strategy. The most direct measurement of how successful is a portfolio
management over a timespan is the accumulative portfolio value (APV), $p_t$. It is unfair, however, to compare the PVs of two management starting of different
initial values. Therefore, APVs here are measured in the unit of their initial values, or equivalently $p_0=1$ and thus
\begin{equation} \label{eq:fapv}
	p_t = p_t/p_0.
\end{equation}
In this unit, APV is then closely related to the accumulated return, and in fact it only differs from the latter by $1$. 
Under the same unit, the final APV (fAPV) is the APV at the end of a back-test experiment, $p_\mathrm{f} = p_\mathrm{f}/p_0 = p_{t_\mathrm{f}+1} / p_0$.

A major disadvantage of APV is that it does not measure the risk factors, since it merely sums up all the periodic returns without considering fluctuation in
these returns. A second metric, the Sharpe ratio (SR) \cite{sharpe1964,sharpe1994}, is used to take risk into account. The ratio is a risk adjusted mean return,
defined as the average of the risk-free return by its deviation,
\begin{equation} \label{eq:sharpe}
	S = \frac{\mathbb{E}_{t}[ \rho_t - \rho_\mathrm{F}]}
	{\sqrt{\var\limits_{t}\left( \rho_t - \rho_\mathrm{F} \right)}},
\end{equation}
where $\rho_t$ are periodic returns defined in \eqref{eq:rho_t}, and $\rho_\mathrm{F}$ is the rate of return of a risk-free asset. In these experiments
the risk-free asset is Bitcoin. Because the quoted currency is also Bitcoin, the risk-free return is zero, $\rho_\mathrm{F} = 0$, here.

Although the SR considers volatility of the portfolio values, but it equally treats upwards and downwards movements. In reality upwards volatility
contributes to positive returns, but downwards to loss. In order to highlight the downwards deviation, Maximum Drawdown (MDD) \cite{magdon2004maximum} is
also considered. MDD is the biggest loss from a peak to a trough, and mathematically
\begin{equation} \label{eq:mdd}
	D = \max_{\substack{\tau>t\\ t}} \frac{p_t-p_\tau}{p_t}.
\end{equation}

\subsection{Results}

\begin{table}[!htb]
	\centering
		\small
		\begin{tabular}{lV{3}cccV{3}cccV{3}ccc}
			{} &\multicolumn{3}{cV{3}}{2016-09-07 to 2016-10-28}
			& \multicolumn{3}{cV{3}}{2016-12-08 to 2017-01-28}
			& \multicolumn{3}{c}{2017-03-07 to 2017-04-27} \\
		\hlineB{3}     
			{}Algorithm &MDD & fAPV & SR &MDD & fAPV & SR &MDD & fAPV & SR\\
		\hlineB{3}     
		\bf{CNN}                                               
		&0.224&\bf{29.695}&\bf{0.087}&0.216&\bf{8.026}&\bf{0.059}&0.406&31.747&0.076 \\
		\bf{bRNN}
		&0.241&13.348&0.074&0.262&4.623&0.043&0.393&\bf{47.148}&\bf{0.082}\\
		\bf{LSTM}
		&0.280&6.692&0.053&0.319&4.073&0.038&0.487&21.173&0.060 \\
		iCNN                                               
		&\bf{0.221}&4.542&0.053&
		0.265&1.573&0.022
		&0.204&3.958&0.044 \\
		\hline
		\it{Best Stock}
		&0.654&1.223&0.012&0.236&1.401&0.018&0.668&4.594&0.033 \\
		\it{UCRP}
		&0.265&0.867&-0.014&\bf{0.185}&1.101&0.010&\bf{0.162}&2.412&0.049 \\
		\it{UBAH}
		&0.324&0.821&-0.015&0.224&1.029&0.004&0.274&2.230&0.036 \\
		\hline
		Anticor
		&0.265&0.867&-0.014&0.185&1.101&0.010&0.162&2.412&0.049 \\
		OLMAR
		&0.913&0.142&-0.039&0.897&0.123&-0.038&0.733&4.582&0.034\\
		PAMR
		&0.997&0.003&-0.137&0.998&0.003&-0.121&0.981&0.021&-0.055 \\
		WMAMR
		&0.682&0.742&-0.0008&0.519&0.895&0.005&0.673&6.692&0.042 \\
		CWMR
		&0.999&0.001&-0.148&0.999&0.002&-0.127&0.987&0.013&-0.061 \\
		RMR
		&0.900&0.127&-0.043&0.929&0.090&-0.045&0.698&7.008&0.041 \\
		\hline
		ONS
		&0.233&0.923&-0.006&0.295&1.188&0.012&0.170&1.609&0.027\\
		UP
		&0.269&0.864&-0.014&0.188&1.094&0.009&0.165&2.407&0.049 \\
		EG
		&0.268&0.865&-0.014&0.187&1.097&0.010&0.163&2.412&0.049 \\
		\hline
		$\mathrm{B^K}$
		&0.436&0.758&-0.013&0.336&0.770&-0.012&0.390&2.070&0.027 \\
		CORN
		&0.999&0.001&-0.129&1.000&0.0001&-0.179&0.999&0.001&-0.125 \\
		M0
		&0.335&0.933&-0.001&0.308&1.106&0.008&0.180&2.729&0.044\\
	\end{tabular}
	\caption[Performances]{Performances of the three EIIE (Ensemble of Identical Independent Evaluators)
	neural networks, an integrated network, and some traditional portfolio selection strategies in three different back-test experiments
	(in UTC, detailed time-ranges listed in Table~\ref{tab:datarange})
	on the cryptocurrency exchange Poloniex.
	The performance metrics are Maximum Drawdown (MDD),
	the final Accumulated Portfolio Value (fAPV) in the unit of initial portfolio amount ($p_\mathrm{f} / p_0$),
	and the Sharpe ratio (SR). The bold algorithms are the EIIE networks introduced in this paper, named
	after the underlining structures of their IIEs. For example, bRNN is the EIIE of Figure~\ref{fig:rnn_topology} using basic RNN evaluators.
	Three benchmarks (italic), the integrated CNN (iCNN) previous proposed by the authors \cite{jiang2016}, and some recently
	reviewed\textsuperscript{a} strategies \cite{li2015olps,li2014survey}
	are also tested.
	The algorithms in the table are divided into five categories, the model-free neural network, the benchmarks, follow-the-loser strategies,
	follow-the-winner strategies, and pattern-matching or other strategies.
	The best performance in each column is highlighted
	with boldface. All three EIIEs
	significantly outperform all other algorithms in the fAPV and SR columns, showing the profitability and reliability of the EIIE machine-learning solution
	to the portfolio management problem.
	}
	\footnotesize a. The exceptions are RMR of \citet{huang2013robust} and WMAMR of \citet{gao2013weighted}.
	\label{tab:performance}
\end{table}

The performances of all three EIIE policy networks proposed in the current paper will be compared to that of the integrated CNN (iCNN) \cite{jiang2016},
several well-known or recently published model-based strategies, and three benchmarks. 

The three benchmarks are the Best Stock, the asset with the most fAPV over the back-test interval, the Uniform Buy and Hold (UBAH), a portfolio management
approach simply equally spreading the total fund into the preselected assets and holding them without making any purchases or selling until the
end \cite{li2014survey}, and Uniform Constant Rebalanced Portfolios (UCRP) \cite{kelly1956crp,cover1991universal}.

Most of the strategies to be compared in this work were surveyed by \citet{li2014survey}, including Aniticor \cite{borodin2004can}, Online Moving Average Reversion 
(OLMAR) \cite{li2015moving}, Passive Aggressive Mean Reversion (PAMR) \cite{li2012pamr}, 
Confidence Weighted Mean Reversion (CWMR) \cite{li2013confidence},
Online Newton Step (ONS) \cite{agarwal2006algorithms}, Universal Portfolios (UP) \cite{cover1991universal}, 
Exponential Gradient (EG) \cite{helmbold1998line}, Nonparametric Kernel Based Log Optimal Strategy ($\mathrm{B^K}$) \cite{gyorfi2006nonparametric},
Correlation-driven Nonparametric Learning Strategy (CORN) \cite{li2011corn}, and M0 \cite{borodin2000competitive}, 
except Weighted Moving Average Mean Reversion (WMAMR) \cite{gao2013weighted} and Robust Median Reversion (RMR) \cite{huang2013robust}.

Table~\ref{tab:performance} shows the performance scores fAPV, SR, and MDD of the EIIE policy networks as well as of the compared strategies for
the three back-test intervals listed in Table~\ref{tab:datarange}. In term of fAPV or SR, the best performing algorithm in Back-Test 1 and 2 is the CNN EIIE
whose final wealth is more than twice of the runner-up in the first experiment. Top three winners in these two measures in all back-tests are occupied by
the three EIIE networks, losing only the MDD measure. This result demonstrates the powerful profitability and consistency of the current EIIE machine-learning
framework.

When only considering fAPV, all three EIIEs outperform the best assets in all three back-tests, while the only model-based algorithm does that is RMR on
the only occasion of Back-Test 3. Because of the high commission rate of $0.25\%$ and the relatively high half-hourly trading frequency, many traditional
strategies have bad performances. Especially in Back-Test 1, all model-based strategies have negative returns, with fAPV less than 1 or equivalently negative SRs.
On the other hand, the EIIEs are able to achieve at least 4-fold returns in 20 days in different market conditions. 

Figures \ref{fig:backtest1}, \ref{fig:backtest2} and \ref{fig:backtest3} plot the APV against time in the three back-tests respectively for the
CNN and bRNN EIIE networks, two selected benchmarks and two model-based strategies. The benchmarks Best Stock and UCRP are two good representatives of the market.
In all three experiments, both CNN and bRNN EIIEs beat the market throughout the entirety of the back-tests, while traditional strategies are only able to achieve
that in the second half of Back-Test 3 and very briefly elsewhere.

\begin{figure}[!h]
\centering
\includegraphics[width=\linewidth]{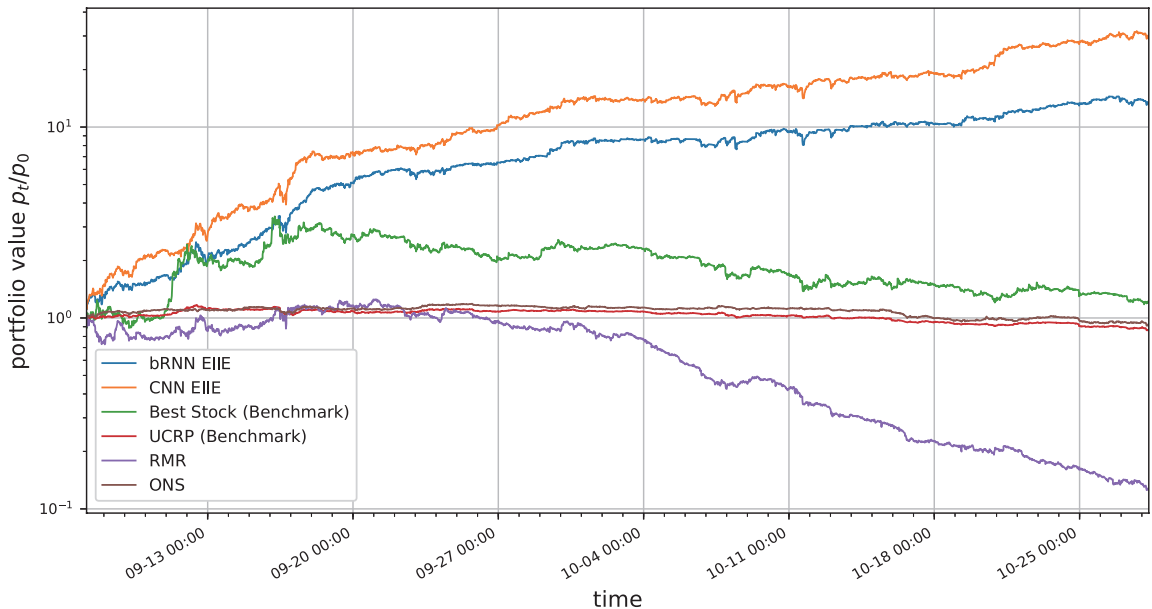}
\caption{Back-Test 1: 2016-09-07-4:00 to 2016-10-28-8:00 (UTC). Accumulated portfolio values (APV, $p_t/p_0$) over the interval of Back-Test 1 for the CNN and
	basic RNN EIIEs, the Best Stock, the UCRP, RMR, and the ONS are plotted in log-10 scale here. The two EIIEs are leading throughout the entire
	time-span, growing consistently only with a few drawdown incidents.
}
\label{fig:backtest1}
\end{figure}

\begin{figure}[!h]
\centering
\includegraphics[width=\linewidth]{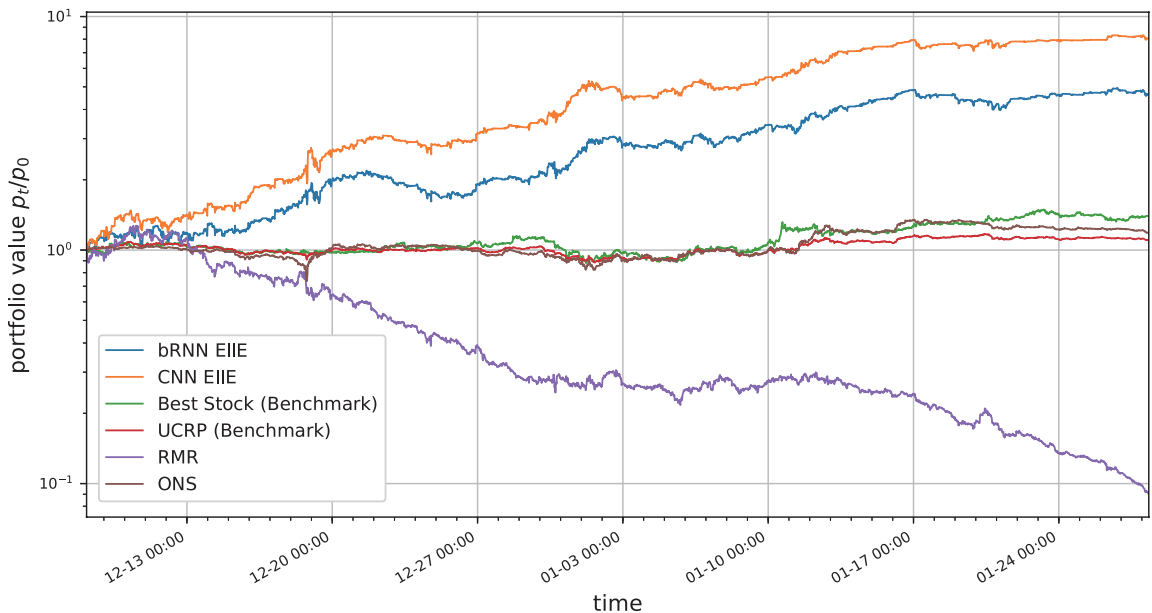}
\caption{Back-Test 2: 2016-12-08-4:00 to 2017-01-28-8:00 (UTC), log-scale accumulated weath. This is the worst experiment amung the three back-tests for the EIIEs.
	However, they are able to steadily climb up till the end of the test.
}
\label{fig:backtest2}
\end{figure}

\begin{figure}[!h]
\centering
\includegraphics[width=\linewidth]{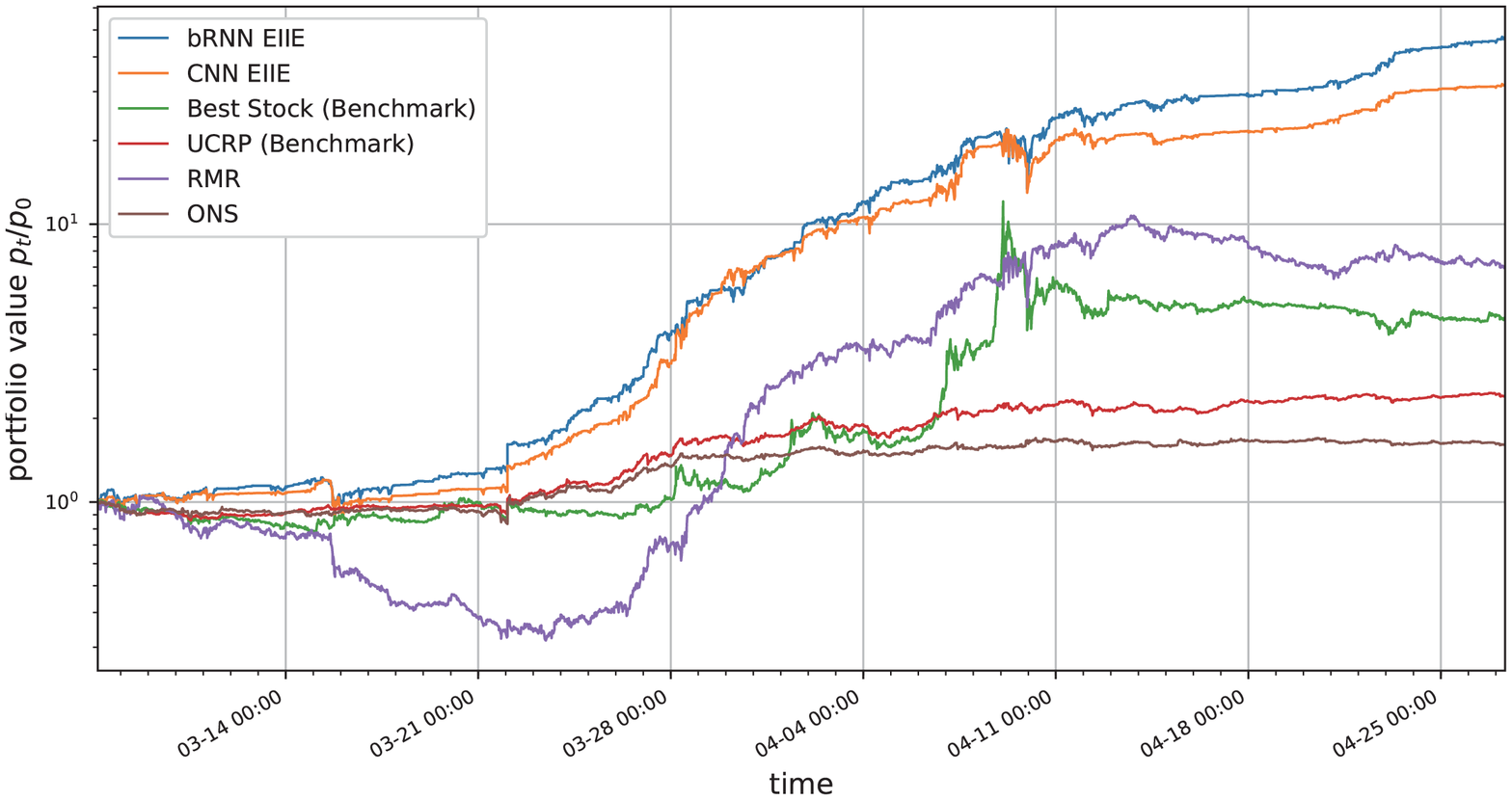}
\caption{Back Test 3: 2017-03-07-4:00 to 2017-04-27-8:00 (UTC), log-scale accumulated weath.
	All algorithms struggle and consolidate at the beginning of this experiment, and
	both of the EIIEs experience two major dips on March 15 and April 9. This diving contributes to their high Maximum Drawdown in the text
	(Table~\ref{tab:performance}). Nevertheless, this is the best month for both EIIEs in term of final wealth.
}
\label{fig:backtest3}
\end{figure}

\section{Conclusion}\label{conclusion}
This article proposed an extensible reinforcement-learning framework solving the general financial portfolio management problem. 
Being invented to cope with multi-channel market inputs and directly output portfolio weights as the market actions,
the framework can be fit in with different deep neural networks, and is linearly scalable with the portfolio size.
This scalability and extensibility are the results of the EIIE meta topology, which is able to accommodate many types of
weight-sharing neural-net structures in the lower level. To take transaction cost into account when training the policy networks,
the framework includes a portfolio-weight memory, the PVM, allowing the portfolio-management agent to learn restraining from
oversized adjustments between consecutive actions, while avoiding the gradient vanishing problem faced by many recurrent networks. 
The PVM also allow parallel training within batching, beating recurrent approaches in learning efficiency to the transaction cost problem.
Moreover, the OSBL scheme governs the online learning process, so that the agent can continuously digest constant incoming market information while trading.
Finally, the agent was trained using a fully exploiting deterministic policy gradient method, aiming to maximize the accumulated wealth as the reinforcement
reward function.

The profitability of the framework surpasses all surveyed traditional portfolio-selection methods, as demonstrated in the paper
by the outcomes of three back-test experiments over different periods in a cryptocurrency market. In these experiments,
the framework was realized using three different underlining networks, a CNN, a basic RNN and a LSTM. All three versions better performed 
in final accumulated portfolio value than other trading algorithms in comparison. The EIIE networks also monopolized the top three positions in the risk-adjusted
score in all three tests, indicating the consistency of the framework in its performances.
Another deep reinforcement learning solution, previously introduced by the authors, was assessed and compared as well under the same settings, losing too to 
the EIIE networks, proving that the EIIE framework is a major improvement over its more primitive cousin.

Among the three EIIE networks, LSTM had much lower scores than the CNN and the basic RNN.
The significant gap in performance between the two RNN species under the same
framework might be an indicator to the well-known secret in financial markets, that history repeats itself. Not being designed to forget its input history, a vanilla RNN
is more able than a LSTM to exploit repetitive patterns in price movement for higher yields.
The gap might also be due to lack of fine-tuning in hyper-parameters for the LSTM. In the experiments, same set of structural hyper-parameters were used for both basic RNN and LSTM.

Despite the success of the EIIE framework in the back-tests, there is room for improvement in future works. The main weakness of the current work 
is the assumptions of zero market impact and zero slippage. In order to consider market impact and slippage, large amount of well-documented real-world
trading examples will be needed as training data. Some protocol will have to be invented for documenting trade actions and market reactions. If that is accomplished,
live trading experiments of the auto-trading agent in its current version can be recorded, for its future version to learn the principles behind
market impacts and slippages from this recorded history. 
Another shortcoming of the work is that the framework has only been tested in one market. To test its adaptability, the current and later versions will need to
be examined in back-tests and live trading in a more traditional financial market.
In addition, the current award function will have to be amended, if not abandoned, for the reinforcement-learning agent
to include awareness of longer-term market reactions. This may be achieved by a critic network. However, the backbone of the current framework, including
the EIIE meta topology, the PVM, and the OSBL scheme, will continue to take important roles in future versions.

\appendix

\section{Proof of Theorem~\ref{thm:convergence}} \label{section:proof}

In order to prove Theorem~\ref{thm:convergence}, it is handy to have the following five lemmas.
\begin{lemma} \label{lem:mono}
	The function $f(\mu)$ in Theorem~\ref{thm:convergence} is monotonically increasing. In other words, $f(\mu_2) \geqslant f(\mu_1)$ if $\mu_2 > \mu_1$.
\end{lemma}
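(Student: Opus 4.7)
The plan is to verify monotonicity by inspecting each $\mu$-dependent piece of the closed-form expression for $f$ and checking signs.

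First I would observe that, for any fixed real $x$ and any nonnegative $y$, the map $\mu \mapsto (x - \mu y)^+$ is non-increasing: if $\mu_2 \geqslant \mu_1$, then $x - \mu_2 y \leqslant x - \mu_1 y$, and since $u \mapsto u^+$ is non-decreasing this gives $(x-\mu_2 y)^+ \leqslant (x-\mu_1 y)^+$. Since portfolio weights satisfy $w_{t,i} \geqslant 0$, applying this pointwise with $x = w'_{t,i}$ and $y = w_{t,i}$ and summing over $i = 1, \dots, m$ shows that
\[
\mu \;\longmapsto\; \sum_{i=1}^m (w'_{t,i} - \mu w_{t,i})^+
\]
is non-increasing on $\mathbb{R}$.

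Next I would check the signs of the constants. Because $c_\mathrm{s}, c_\mathrm{p} \in [0,1)$, I can rewrite
\[
c_\mathrm{s} + c_\mathrm{p} - c_\mathrm{s} c_\mathrm{p} = 1 - (1-c_\mathrm{s})(1-c_\mathrm{p}) \in [0,1),
\]
so in particular this coefficient is nonnegative. Hence subtracting $(c_\mathrm{s} + c_\mathrm{p} - c_\mathrm{s} c_\mathrm{p})$ times a non-increasing function of $\mu$ yields a non-decreasing function of $\mu$; adding the $\mu$-independent term $1 - c_\mathrm{p} w'_{t,0}$ preserves this. Finally, since $w_{t,0} \in [0,1]$ and $c_\mathrm{p} \in [0,1)$, the denominator satisfies $1 - c_\mathrm{p} w_{t,0} > 0$, so dividing by it is a multiplication by a positive constant and preserves monotonicity.

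Combining these observations, $f(\mu_2) \geqslant f(\mu_1)$ whenever $\mu_2 > \mu_1$, which is the claim. I do not foresee any real obstacle here — the argument is simply a sign-tracking exercise — and the only subtle point to flag is that the prefactor $1/(1 - c_\mathrm{p} w_{t,0})$ is strictly positive, which is needed to conclude that multiplying by it does not flip the inequality.
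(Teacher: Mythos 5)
Your proof is correct and takes essentially the same route as the paper's, which disposes of the lemma in one line by appealing to the monotonicity of the rectifier $(x)^+$. Your version is in fact the more careful one: you explicitly track the two sign reversals (the inner argument $w'_{t,i}-\mu w_{t,i}$ is non-increasing in $\mu$ because $w_{t,i}\geqslant 0$, and the nonnegative coefficient $c_\mathrm{s}+c_\mathrm{p}-c_\mathrm{s}c_\mathrm{p}=1-(1-c_\mathrm{s})(1-c_\mathrm{p})$ enters with a minus sign) together with the strict positivity of the prefactor $1/(1-c_\mathrm{p}w_{t,0})$, all of which the paper leaves implicit.
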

\begin{proof}
	Recall that from Section~\ref{section:transaction},
	\[
	f(\mu) := 
    	    \frac{1}
	     { 1 - c_\mathrm{p} w_{t,0} }
	    \left[
	    1 - c_\mathrm{p} w'_{t,0} - (c_\mathrm{s}+c_\mathrm{p} - c_\mathrm{s}c_\mathrm{p})
	     \sum_{i=1}^m (w'_{t,i} - \mu w_{t,i})^+ 
        \right],
	\]
	The fact that the linear rectifier $(x)^+$ {\Large(}$(x)^+ = x$ if $x>0$, $(x)^+ =0$ otherwise{\Large)} is monotonically increasing readily
	implies that $f(\mu)$ is also monotonically increasing.
\end{proof}

\begin{lemma} \label{lem:lower_bound}
	\[f(0)>0.\]
\end{lemma}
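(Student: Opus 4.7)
The plan is to evaluate $f$ at $\mu = 0$ directly and show that both the prefactor and the bracketed expression are strictly positive. First I would note that since every $w'_{t,i}$ is a portfolio weight, it is non-negative, so $(w'_{t,i} - 0 \cdot w_{t,i})^+ = w'_{t,i}$, which lets me replace the rectified sum by $\sum_{i=1}^{m} w'_{t,i} = 1 - w'_{t,0}$ using the simplex constraint on $\vec{w}'_t$. This reduces $f(0)$ to a closed-form expression in $w'_{t,0}$, $w_{t,0}$, $c_\mathrm{s}$, and $c_\mathrm{p}$.

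Next I would handle the prefactor $\tfrac{1}{1 - c_\mathrm{p} w_{t,0}}$. Since $c_\mathrm{p} \in [0,1)$ and $w_{t,0} \in [0,1]$, we have $c_\mathrm{p} w_{t,0} < 1$, so the denominator is strictly positive and the prefactor is well-defined and positive.

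The main content is then to show the bracketed numerator is positive. I would introduce $\alpha := c_\mathrm{s} + c_\mathrm{p} - c_\mathrm{s}c_\mathrm{p}$ and observe the algebraic identity $1 - \alpha = (1 - c_\mathrm{s})(1 - c_\mathrm{p})$, as well as $\alpha - c_\mathrm{p} = c_\mathrm{s}(1 - c_\mathrm{p})$. After substituting the simplified sum, the numerator becomes
\[
1 - c_\mathrm{p} w'_{t,0} - \alpha(1 - w'_{t,0}) = (1 - \alpha) + (\alpha - c_\mathrm{p}) w'_{t,0}.
\]
Both terms are non-negative: $(1-\alpha) = (1-c_\mathrm{s})(1-c_\mathrm{p}) > 0$ because $c_\mathrm{s}, c_\mathrm{p} \in [0,1)$, and $(\alpha - c_\mathrm{p}) w'_{t,0} = c_\mathrm{s}(1-c_\mathrm{p}) w'_{t,0} \geqslant 0$. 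Hence the numerator is bounded below by $(1-c_\mathrm{s})(1-c_\mathrm{p}) > 0$, and the lemma follows.

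There is no real obstacle here; the only subtlety is recognising the factorisation $1 - \alpha = (1-c_\mathrm{s})(1-c_\mathrm{p})$, which cleanly reveals the sign. The proof is purely algebraic once the rectifier is removed via the non-negativity of $w'_{t,i}$.
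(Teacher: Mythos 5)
Your proof is correct, and in its final step it is actually sharper than the one in the paper. Both arguments begin identically: evaluate $f$ at $\mu=0$, use $w'_{t,i}\geqslant 0$ to strip the rectifier, and apply the simplex constraint to collapse the sum to $1-w'_{t,0}$. The paper then lower-bounds the bracket crudely, replacing $w'_{t,0}$ by $1$ in the $-c_\mathrm{p}w'_{t,0}$ term and $1-w'_{t,0}$ by $1$ in the $-\alpha(1-w'_{t,0})$ term (two substitutions that cannot both be tight), arriving at $1-2c_\mathrm{p}-c_\mathrm{s}+c_\mathrm{s}c_\mathrm{p}$, which is positive only under the extra assumption $c_\mathrm{p},c_\mathrm{s}<38\%$ --- an assumption the paper dismisses as practically harmless but which is nonetheless a genuine hypothesis. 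Your factorisation
\[
1 - c_\mathrm{p} w'_{t,0} - \alpha(1 - w'_{t,0}) = (1-c_\mathrm{s})(1-c_\mathrm{p}) + c_\mathrm{s}(1-c_\mathrm{p})\,w'_{t,0} \geqslant (1-c_\mathrm{s})(1-c_\mathrm{p}) > 0
\]
avoids this entirely and establishes $f(0)>0$ unconditionally for all $c_\mathrm{s},c_\mathrm{p}\in[0,1)$, which is strictly stronger than what the paper proves. The only point worth making explicit (which you do) is that $w'_{t,i}\geqslant 0$, which holds because $\vec{w}'_t$ in Equation~\eqref{eq:w_prime} is a normalised element-wise product of non-negative quantities.
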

\begin{proof} Using the fact that $w_{t,0},w'_{t,0} \in [0,1]$,
    \begin{align*}
	f(0) &=
    	    \frac{1}
	     { 1 - c_\mathrm{p} w_{t,0} }
	    \left[
	        1 - c_\mathrm{p} w'_{t,0} - (c_\mathrm{s}+c_\mathrm{p} - c_\mathrm{s}c_\mathrm{p})
	           \sum_{i=1}^m (w'_{t,i})^+ 
            \right] \\
	    &=
	    \frac{1}
	     { 1 - c_\mathrm{p} w_{t,0} }
	    \left[
		    1 - c_\mathrm{p} w'_{t,0} - (c_\mathrm{s}+c_\mathrm{p} - c_\mathrm{s}c_\mathrm{p})
		    (1-w'_{t,0})
	    \right] \\
	    &\geqslant 
		    1 - 2 c_\mathrm{p} - c_\mathrm{s} + c_\mathrm{s}c_\mathrm{p} >0,
    \end{align*}
	for $c_\mathrm{p}, c_\mathrm{s} < 38\%$. For a commission rate, $38\%$ is impractically high. Therefore, $f(0)>0$ 
	always holds. 
\end{proof}

\begin{lemma} \label{lem:upper_bound}
	\[f(1)\leqslant 1.\]
\end{lemma}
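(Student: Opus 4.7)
The plan is to reduce $f(1)\leqslant 1$ to an inequality between the non-cash coordinates of $\vec w_t$ and $\vec w'_t$, and then exploit the budget constraints $\sum_{i=0}^{m} w_{t,i} = \sum_{i=0}^{m} w'_{t,i} = 1$.

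First I would substitute $\mu=1$ into the definition of $f$ and, using that $1 - c_\mathrm{p} w_{t,0} > 0$ (since $w_{t,0}\in[0,1]$ and $c_\mathrm{p}<1$), clear the denominator. After canceling the constant $1$ from both sides, the inequality $f(1)\leqslant 1$ becomes
\[
c_\mathrm{p}\bigl(w_{t,0} - w'_{t,0}\bigr) \;\leqslant\; (c_\mathrm{s}+c_\mathrm{p}-c_\mathrm{s}c_\mathrm{p}) \sum_{i=1}^{m} \bigl(w'_{t,i} - w_{t,i}\bigr)^{+}.
\]

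Next I would use the simplex constraints on $\vec w_t$ and $\vec w'_t$ in the form $w_{t,0} - w'_{t,0} = \sum_{i=1}^{m}(w'_{t,i} - w_{t,i})$, together with the identity $a = (a)^+ - (-a)^+$, to rewrite
\[
w_{t,0} - w'_{t,0} \;=\; \sum_{i=1}^{m}(w'_{t,i} - w_{t,i})^{+} \;-\; \sum_{i=1}^{m}(w_{t,i} - w'_{t,i})^{+} \;=:\; A - B,
\]
with $A,B\geqslant 0$. The target inequality then reads $c_\mathrm{p}(A-B)\leqslant (c_\mathrm{s}+c_\mathrm{p}-c_\mathrm{s}c_\mathrm{p}) A$. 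If $A\geqslant B$ both sides are non-negative, and the conclusion follows from $c_\mathrm{p}(A-B)\leqslant c_\mathrm{p} A \leqslant (c_\mathrm{s}+c_\mathrm{p}-c_\mathrm{s}c_\mathrm{p}) A$, where the last step uses $(c_\mathrm{s}+c_\mathrm{p}-c_\mathrm{s}c_\mathrm{p})-c_\mathrm{p}=c_\mathrm{s}(1-c_\mathrm{p})\geqslant 0$. If $A<B$ the left-hand side is negative while the right-hand side is non-negative, so the inequality is trivial.

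I do not expect a serious obstacle here; the only subtlety is being careful that the simplex constraint is applied to the full portfolio (including the cash coordinate $0$) so that the $0$-th weight difference exactly equals the negative sum of the non-cash weight differences. The algebraic manipulation above then reduces the claim to the pointwise observation that $c_\mathrm{s}+c_\mathrm{p}-c_\mathrm{s}c_\mathrm{p}\geqslant c_\mathrm{p}$, which is immediate from $c_\mathrm{p}\in[0,1)$.
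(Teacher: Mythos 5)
Your argument is correct and uses essentially the same ingredients as the paper's proof: the budget constraint $w_{t,0}-w'_{t,0}=\sum_{i=1}^m(w'_{t,i}-w_{t,i})$, the identity $(a)^+-(-a)^+=a$, and the observation $c_\mathrm{s}+c_\mathrm{p}-c_\mathrm{s}c_\mathrm{p}\geqslant c_\mathrm{p}$; your case split on the sign of $A-B$ is exactly the paper's split on $w'_{t,0}$ versus $w_{t,0}$. The only difference is cosmetic: you argue directly where the paper handles the case $w'_{t,0}<w_{t,0}$ by contradiction, arriving at the same final inequality $c_\mathrm{s}(1-c_\mathrm{p})A\geqslant -c_\mathrm{p}B$.
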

\begin{proof}
	The proof is split into two cases.
	The fact that $c_\mathrm{p},c_\mathrm{p} \in [0,1)$ implies $(c_\mathrm{s}+c_\mathrm{p} - c_\mathrm{s}c_\mathrm{p}) \geqslant 0$
	will be used in Case 1.
	\begin{description}
		\item[Case 1: $w'_{t,0} \geqslant w_{t,0}.$]
			Since $1 - c_\mathrm{p} w_{t,0} > 0$,
			\begin{align*}
				f(1) &= \frac{1 - c_\mathrm{p} w'_{t,0}}{1 - c_\mathrm{p} w_{t,0}}
					- \frac{1}{1 - c_\mathrm{p} w_{t,0}}
					 (c_\mathrm{s}+c_\mathrm{p} - c_\mathrm{s}c_\mathrm{p})
					  \sum_{i=1}^m (w'_{t,i} - w_{t,i})^+ 
					  \\
				 & \leqslant
				      1 - \frac{1}{1 - c_\mathrm{p} w_{t,0}}
					 (c_\mathrm{s}+c_\mathrm{p} - c_\mathrm{s}c_\mathrm{p})
					  \sum_{i=1}^m (w'_{t,i} - w_{t,i})^+ 
					\\
				 & \leqslant 1
			\end{align*}
		\item[Case 2: $w'_{t,0} < w_{t,0}.$]
			This will be proved by contradiction. By assuming $f(1)>1$,
			\[
		    		1 - c_\mathrm{p} w'_{t,0} - (c_\mathrm{s}+c_\mathrm{p} - c_\mathrm{s}c_\mathrm{p})
				 \sum_{i=1}^m (w'_{t,i} - w_{t,i})^+
				  > 1 - c_\mathrm{p} w_{t,0}.
			\]
			Bringing the two $w_0$'s together,
			\begin{equation} \label{eq:A1}
				c_\mathrm{p}(w_{t,0} - w'_{t,0}) > (c_\mathrm{s}+c_\mathrm{p} - c_\mathrm{s}c_\mathrm{p})
				 \sum_{i=1}^m (w'_{t,i} - w_{t,i})^+.
			\end{equation}
			Noting that
			$
			 w'_{t,0} + \sum\limits_{i=1}^m w'_{t,i} = 1 = w_{t,0} + \sum\limits_{i=1}^m w_{t,i}
			$,
			\[
				w_{t,0} - w'_{t,0} = 1 - \sum_{i=1}^m w_{t,i} - \left(1 - \sum_{i=1}^m w'_{t,i}\right)
				= \sum_{i=1}^m \left(w'_{t,i} - w_{t,i}\right).
			\]
			Using identity $(a-b)^+ - (b-a)^+ = a-b$, \eqref{eq:A1} becomes
			\[
				c_\mathrm{p}  \left[ \sum_{i=1}^m (w'_{t,i} - w_{t,i})^+ 
				- \sum_{i=1}^m (w_{t,i} - w'_{t,i})^+ \right]
				> (c_\mathrm{s}+c_\mathrm{p} - c_\mathrm{s}c_\mathrm{p})
				 \sum_{i=1}^m (w'_{t,i} - w_{t,i})^+.
			\]
			Moving the $(w'_{t,i} - w_{t,i})^+$ terms to the right-hand side,
			\begin{equation} \label{eq:contradition}
				-c_\mathrm{p} \sum_{i=1}^m (w_{t,i} - w'_{t,i})^+ >
				c_\mathrm{s}(1-c_\mathrm{p}) \sum_{i=1}^m (w'_{t,i} - w_{t,i})^+.
			\end{equation}
			The left-hand side of \eqref{eq:contradition} is a non-positive number, and the right-hand side is a non-negative number.
			The former is greater than the latter, arriving at a contradiction.
	\end{description}
	Therefore, $f(1)\leqslant 1$ in both cases.
\end{proof}

\begin{lemma} \label{lem:mu0}
	the sequence $\left\{\tilde{\mu}^{(k)} _t\right\}$, defined as
	\begin{equation*}
		\left\{\tilde{\mu}^{(k)} _t \left| \tilde{\mu}^{(0)} _t = 0 \;\mathrm{and}\; 
			\tilde{\mu}^{(k)} _t = f\left(\tilde{\mu}^{(k-1)} _t\right),\;
		k \in \mathbb{N}_0 \right. \right\}
	\label{eq:mu0_sequence}
	\end{equation*}
	converges to $\mu_{t}$.
\end{lemma}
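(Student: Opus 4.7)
The plan is to prove Lemma~\ref{lem:mu0} using the classical monotone bounded convergence recipe, leveraging the three preparatory lemmas already established. Concretely, I would first show that the iterates $\tilde{\mu}^{(k)}_t$ stay in $[0,1]$, then that they form a monotonically non-decreasing sequence, then invoke monotone convergence to extract a limit, and finally identify this limit with $\mu_t$ using continuity of $f$ together with a uniqueness argument for the fixed point.

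For boundedness, I would proceed by induction on $k$. The base case is immediate since $\tilde{\mu}^{(0)}_t = 0 \in [0,1]$. For the inductive step, assuming $\tilde{\mu}^{(k-1)}_t \in [0,1]$, Lemma~\ref{lem:mono} (monotonicity of $f$) together with Lemma~\ref{lem:lower_bound} and Lemma~\ref{lem:upper_bound} gives
\[
    0 < f(0) \leqslant f\!\left(\tilde{\mu}^{(k-1)}_t\right) = \tilde{\mu}^{(k)}_t \leqslant f(1) \leqslant 1.
\]
Monotonicity of the sequence follows from another induction: the first step $\tilde{\mu}^{(1)}_t = f(0) > 0 = \tilde{\mu}^{(0)}_t$ uses Lemma~\ref{lem:lower_bound}, and if $\tilde{\mu}^{(k)}_t \geqslant \tilde{\mu}^{(k-1)}_t$ then applying Lemma~\ref{lem:mono} yields $\tilde{\mu}^{(k+1)}_t = f(\tilde{\mu}^{(k)}_t) \geqslant f(\tilde{\mu}^{(k-1)}_t) = \tilde{\mu}^{(k)}_t$. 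Being bounded above by $1$ and non-decreasing, the monotone convergence theorem produces a limit $L \in [f(0), 1]$.

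To identify $L$ with $\mu_t$, I would observe that $f$ is continuous, because it is a composition of affine maps with the continuous rectifier $(\,\cdot\,)^+$ and a finite sum. Passing to the limit in $\tilde{\mu}^{(k)}_t = f(\tilde{\mu}^{(k-1)}_t)$ therefore gives $L = f(L)$, so $L$ satisfies Equation~\eqref{eq:mu_iteration}, which is precisely the defining equation of $\mu_t$.

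The main obstacle I anticipate is uniqueness of the fixed point: without it, the above only shows convergence to \emph{some} fixed point. To handle this I would note that $f$ is piecewise affine in $\mu$ and, wherever it is differentiable, its slope equals
\[
    f'(\mu) = \frac{c_\mathrm{s}+c_\mathrm{p}-c_\mathrm{s}c_\mathrm{p}}{1 - c_\mathrm{p} w_{t,0}} \sum_{i: w'_{t,i} > \mu w_{t,i}} w_{t,i} \leqslant \frac{(c_\mathrm{s}+c_\mathrm{p}-c_\mathrm{s}c_\mathrm{p})(1-w_{t,0})}{1 - c_\mathrm{p} w_{t,0}} < 1
\]
for realistic commission rates (a short algebraic check, analogous to the one in Lemma~\ref{lem:lower_bound}, confirms strict inequality). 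Hence $g(\mu) := \mu - f(\mu)$ is strictly increasing and can vanish at most once. Combined with $g(0) = -f(0) < 0$ and $g(1) = 1 - f(1) \geqslant 0$, there is exactly one fixed point in $[0,1]$, which must be $\mu_t$. This closes the argument for $\mu_\odot = 0$; the general case $\mu_\odot \in [0,1]$ stated in Theorem~\ref{thm:convergence} will follow by sandwiching the $\mu_\odot$-sequence between the $\mu_\odot = 0$ and $\mu_\odot = 1$ sequences, both of which converge to the same limit $\mu_t$.
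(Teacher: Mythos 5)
Your proof is correct and follows the same monotone--bounded--convergence skeleton as the paper's: induction with Lemma~\ref{lem:mono} and Lemma~\ref{lem:lower_bound} for monotonicity, the Monotone Convergence Theorem for existence of a limit, and continuity of $f$ to show the limit is a fixed point. There are two genuine differences worth noting. First, the paper bounds the iterates above by $\mu_t$ itself (via $f(\mu_t)=\mu_t$ and monotonicity of $f$), whereas you bound them by $1$ using Lemma~\ref{lem:upper_bound}; both work, but the paper's choice reserves Lemma~\ref{lem:upper_bound} for the companion sequence started at $\mu_\odot=1$ in Lemma~\ref{lem:mu1}. Second, and more substantively, you explicitly prove uniqueness of the fixed point by bounding the piecewise slope of $f$ by
$\bigl(c_\mathrm{s}+c_\mathrm{p}-c_\mathrm{s}c_\mathrm{p}\bigr)\bigl(1-w_{t,0}\bigr)/\bigl(1-c_\mathrm{p}w_{t,0}\bigr)<1$,
so that $\mu\mapsto\mu-f(\mu)$ is strictly increasing and vanishes exactly once in $[0,1]$. (Your hedge about ``realistic commission rates'' is unnecessary: since $c_\mathrm{s}+c_\mathrm{p}-c_\mathrm{s}c_\mathrm{p}=1-(1-c_\mathrm{s})(1-c_\mathrm{p})<1$ and $1-w_{t,0}\leqslant 1-c_\mathrm{p}w_{t,0}$, the bound holds for all $c_\mathrm{s},c_\mathrm{p}\in[0,1)$.) The paper skips this step and simply identifies its limit $\mu^*$ with $\mu_t$ on the grounds that both solve Equation~\eqref{eq:mu_iteration}, which tacitly assumes the solution is unique; your slope argument closes that gap and, as a bonus, shows $f$ is a contraction on $[0,1]$, which would give convergence for arbitrary $\mu_\odot$ directly and render the case analysis in the proof of Theorem~\ref{thm:convergence} unnecessary.
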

\begin{proof}
	This is a special case of the final goal Theorem~\ref{thm:convergence} when $\mu_\odot = 0$.
	This convergence is proved by the Monotone Convergence Theorem (MCT) \cite[Chapter~5]{rudin1964principles}.
	The monotonicity of $f$ by Lemma~\ref{lem:mono} with Mathematical Induction establishes 
	an upper bound for $\left\{\tilde{\mu}^{(k)} _t\right\}$. 
	\begin{align*}
	\left.
	\begin{array}{r@{\mskip\thickmuskip}l}
	    \tilde{\mu}^{(0)} _t = 0 & < \mu_{t}, \\
	    \mathrm{If}\;
	    \tilde{\mu}^{(k-1)} _t \leqslant \mu_{t},\; 
	         \tilde{\mu}^{(k)} _t = f\left( \tilde{\mu}^{(k-1)} _t \right) & \leqslant f(\mu_{t}) = \mu_{t}
        \end{array}
        \right\}
	\implies
        \tilde{\mu}^{(k)} _t \leqslant \mu_{t},\; \forall k.
	\end{align*}
	Note that by definition $\mu_{t}$ is the transaction remainder factor, and $0<\mu_t\leqslant 1$.
	The monotonicity of sequence $\left\{\tilde{\mu}^{(k)} _t\right\}$ itself can be also proved by Mathematical Induction and
	Lemma~\ref{lem:lower_bound}.
	\begin{align*}
	\left.
	\begin{array}{r@{\mskip\thickmuskip}l}
		\tilde{\mu}^{(1)} _t = f\left( 0 \right) & >
	    0 = \tilde{\mu}^{(0)} _t, \\
	    \mathrm{If}\;
	    \tilde{\mu}^{(k-1)} _t \geqslant \tilde{\mu}^{(k-2)} _t,\;
	    \tilde{\mu}^{(k)} _t = f\left( \tilde{\mu}^{(k-1)} _t \right) &\geqslant f\left( \tilde{\mu}^{(k-2)} _t \right)
	        = \tilde{\mu}^{(k-1)} _t
        \end{array}
        \right\}
	\implies
        \tilde{\mu}^{(k)} _t \geqslant \tilde{\mu}^{(k-1)} _t ,\; \forall k.
	\end{align*}
	If $\tilde{\mu}^{(k)} _t = \tilde{\mu}^{(k-1)} _t$, then $\tilde{\mu}^{(k)} _t$ is the solution of Equation~\eqref{eq:mu_iteration},
	and the proof ends here. Otherwise, the sequence $\left\{\tilde{\mu}^{(k)} _t\right\}$ is strictly increasing and bounded above
	by $\mu_{t}$. In that case, by MCT, $\lim\limits_{k\to\infty} \tilde{\mu}^{(k)} _t =\mu^*$,
	where $\mu^*$ is the Least Upper Bound of $\left\{\tilde{\mu}^{(k)} _t\right\}$.
	As a result,
	$
	    0 = \lim\limits_{k\to\infty} \left(\tilde{\mu}^{(k+1)} _t - \tilde{\mu}^{(k)} _t \right) = 
	        \lim\limits_{k\to\infty} \left(
	            f\left(\tilde{\mu}^{(k)} _t\right) - \tilde{\mu}^{(k-1)} _t
		\right)
	      = f\left( \mu^* \right) - \mu^*.
	$
	Therefore, $\mu^*$ is the solution to Equation~\eqref{eq:mu_iteration}, and hence 
	$\lim\limits_{k\to\infty} \tilde{\mu}^{(k)} _t =\mu_{t}$.
\end{proof}

\begin{lemma} \label{lem:mu1}
	the sequence $\left\{\tilde{\mu}^{(k)} _t\right\}$, defined as
	\begin{equation*}
		\left\{\tilde{\mu}^{(k)} _t \left| \tilde{\mu}^{(0)} _t = 1 \;\mathrm{and}\; 
			\tilde{\mu}^{(k)} _t = f\left(\tilde{\mu}^{(k-1)} _t\right),\;
		k \in \mathbb{N}_0 \right. \right\}
	\label{eq:mu1_sequence}
	\end{equation*}
	converges to $\mu_{t}$.
\end{lemma}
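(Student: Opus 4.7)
My plan is to mirror the proof of Lemma~\ref{lem:mu0} almost verbatim, but with the roles of upper and lower bounds swapped so that the sequence is monotonically \emph{decreasing} and bounded below by $\mu_t$. Lemma~\ref{lem:upper_bound} gives the base case for monotonicity, Lemma~\ref{lem:mono} lets me propagate the comparison through $f$ by induction, and the Monotone Convergence Theorem (MCT) then produces the limit, which must be $\mu_t$ by the fixed-point equation \eqref{eq:mu_iteration}.

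First I would establish the lower bound $\tilde{\mu}^{(k)}_t \geqslant \mu_t$ for all $k$ by induction. The base case follows from the fact that $\mu_t$ is a transaction remainder factor with $0 < \mu_t \leqslant 1 = \tilde{\mu}^{(0)}_t$. For the inductive step, assuming $\tilde{\mu}^{(k-1)}_t \geqslant \mu_t$, Lemma~\ref{lem:mono} and the fixed-point identity $f(\mu_t) = \mu_t$ yield
\[
    \tilde{\mu}^{(k)}_t = f\!\left( \tilde{\mu}^{(k-1)}_t \right) \geqslant f(\mu_t) = \mu_t .
\]

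Next I would show that the sequence is itself monotonically decreasing. Lemma~\ref{lem:upper_bound} gives the base case $\tilde{\mu}^{(1)}_t = f(1) \leqslant 1 = \tilde{\mu}^{(0)}_t$, and Lemma~\ref{lem:mono} provides the inductive step: if $\tilde{\mu}^{(k-1)}_t \leqslant \tilde{\mu}^{(k-2)}_t$, then $\tilde{\mu}^{(k)}_t = f(\tilde{\mu}^{(k-1)}_t) \leqslant f(\tilde{\mu}^{(k-2)}_t) = \tilde{\mu}^{(k-1)}_t$. If the sequence ever stabilises at some step (i.e.\ $\tilde{\mu}^{(k)}_t = \tilde{\mu}^{(k-1)}_t$), then that common value solves \eqref{eq:mu_iteration} and is therefore $\mu_t$, finishing the proof immediately. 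Otherwise the sequence is strictly decreasing and bounded below by $\mu_t$, so by MCT it converges to its Greatest Lower Bound $\mu^*$. Passing to the limit in $\tilde{\mu}^{(k)}_t = f(\tilde{\mu}^{(k-1)}_t)$, using the continuity of $f$ (it is piecewise linear), gives $\mu^* = f(\mu^*)$; hence $\mu^*$ solves \eqref{eq:mu_iteration} and therefore equals $\mu_t$, exactly as in the final step of Lemma~\ref{lem:mu0}.

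There is essentially no hard part, since the whole argument is symmetric to the one already carried out for $\tilde{\mu}^{(0)}_t = 0$: Lemma~\ref{lem:lower_bound} was what made that sequence increase from below, and Lemma~\ref{lem:upper_bound} is what makes this one decrease from above. The only subtlety worth flagging is the tacit appeal to uniqueness of the fixed point of $f$ (so that $\mu^* = \mu_t$ rather than some other solution), but this is handled exactly as in Lemma~\ref{lem:mu0}, where the paper treats $\mu_t$ as \emph{the} solution of \eqref{eq:mu_iteration}; no new reasoning is needed here.
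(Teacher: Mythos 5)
Your proposal is correct and matches the paper's argument exactly: the paper's own proof of this lemma is just a one-line remark that it mirrors Lemma~\ref{lem:mu0}, with monotone decrease supplied by Lemma~\ref{lem:upper_bound} and the lower bound $\mu_t$ supplied by Lemma~\ref{lem:mono} plus induction, followed by MCT and the fixed-point identification. You have simply written out in full the details the paper leaves implicit, so nothing further is needed.
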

\begin{proof}
	The proof is similar to that of Lemma~\ref{lem:mu0} using Mathematical Induction and MCT.
	The sequence is monotonically decreasing and bounded below by $\mu_t$.
	The monotonically is a result of Lemma~\ref{lem:upper_bound}, and the boundedness is by Lemma~\ref{lem:mono}.
\end{proof}

With the previous two Lemmas, it is in a good position to prove the general convergence theorem.
Recall Theorem~\ref{thm:convergence} from Section~\ref{section:transaction}:
\convergence*
\begin{proof}
    It is proved in three cases:
    \begin{description}	
	\item[Case 1: $\mu_\odot = \mu_t,\;0,\;\mathrm{or}\;1$.] This case is trivial, as when $\mu_\odot = \mu_t$ it is the solution of $\mu=f(\mu)$,
		and the sequence will obviously converge. The convergence to $\mu_t$ for the other two values of $\mu_\odot$ is guaranteed by 
		Lemma~\ref{lem:mu0} and \ref{lem:mu1}
	\item[Case 2: $0 < \mu_\odot < \mu_t$.]
		Constructing a sequence $\left\{\widehat{\mu}^{(k)} _t\right\}$ using $\mu_\odot = 0$,
		\[
		\left\{\widehat{\mu}^{(k)} _t \left| \widehat{\mu}^{(0)} _t = 0 \;\mathrm{and}\; 
			\widehat{\mu}^{(k)} _t = f\left(\widehat{\mu}^{(k-1)} _t\right),\;
		k \in \mathbb{N}_0 \right. \right\}.
		\]
		By the proof of Lemma~\ref{lem:mu0}, $\left\{\widehat{\mu}^{(k)} _t\right\}$ is strictly increasing and bounded above by $\mu_t$,
		so there is a $j\in \mathbb{N}_0$ such that
		\[
			\widehat{\mu}^{(j)} \leqslant \mu_\odot \leqslant \widehat{\mu}^{(j+1)}.
		\]
		If any of the above equality holds, the two sequences coincide from $j+1$ onward, converging to $\mu_t$, and the proof ends here. Otherwise,
		\[
			\widehat{\mu}^{(j)} < \mu_\odot < \widehat{\mu}^{(j+1)}.
		\]
		Using the monotonicity of $f(\mu)$ by Lemma~\ref{lem:mono}, these inequalities become
		\[
			\widehat{\mu}^{(j+1)} = f(\widehat{\mu}^{(j)}) \leqslant 
			\tilde{\mu}^{(1)} = \mu_\odot \leqslant
			f(\widehat{\mu}^{(j+1)}) = \widehat{\mu}^{(j+2)}.
		\]
		Again, if one of the equality holds, the proof ends here. This chain can go on indefinitely if no equality holds,
		\[
			\widehat{\mu}^{(j+k+1)} < 
			\tilde{\mu}^{(k)}  <
			 \widehat{\mu}^{(j+k+2)}.
		\]
		By the Squeeze Theorem \cite{leithold1996calculus},
		\[
			\lim_{k\to \infty} \tilde{\mu}^{(k)} = \lim_{k\to \infty} \widehat{\mu}^{(k)}  = \mu_t.
		\]
	 \item[Case 3: $1 > \mu_\odot > \mu_t$.]
		 This case is proved in a similar way to Case 2, by constructing a sequence with $\mu_\odot=1$
		 and making use of Lemma~\ref{lem:mu1}.
    \end{description}
    In conclusion, for any initial value $\mu_\odot \in [0,1]$, the sequence $\tilde{\mu}^{(k)}$ converges to $\mu_t$.
\end{proof}

\section{Hyper-Parameters}
\newcolumntype{L}[1]{>{\raggedright\let\newline\\\arraybackslash\hspace{0pt}}m{#1}}
\begin{table}[!htb]
	\centering
	\begin{tabular}{L{8em}|r|L{24em}} 
		hyper-parameters & value & description\\ 
		\hlineB{3}
		batch size & 50 & Size of mini-batch during training. (Section~\ref{section:preselection})\\
		\hline
		window size & 50 & Number of the columns (number of the trading periods) in each input price matrix. (Section~\ref{section:price_tensor})  \\
		\hline
		number of assets & 12 & Total number of preselected assets (including the cash, Bitcoin). (Section~\ref{section:preselection}) \\ 
		\hline
		trading period (second)& 1800 & Time interval between two portfolio redistributions. (Section~\ref{section:period})\\
		\hline
		total steps & $2\times 10^6$ & Total number of steps for pre-training in the training set.\\
		\hline
		regularization coefficient & $10^{-8}$ &  The L2 regularization coefficient applied to network training.\\
		\hline
		learning rate & $3 \times 10^{-5}$ & Parameter $\alpha$ (i.e. the step size) of the Adam optimization \cite{Kingma2014}.\\
		\hline
		volume observation (day) & 30 & The length of time during which trading volumes are used to preselect the portfolio assets.
		(Section~\ref{section:preselection})\\
		\hline
		commission rate & $0.25\%$ & Rate of commission fee applied to each transaction. (Section~\ref{section:transaction})\\
		\hline
		rolling steps & 30 & Number of online training steps for each period during cross-validation and back-tests. \\
		\hline
		sample bias & $5\times 10^{-5}$ & Parameter of geometric distribution when selecting online training sample batches.
		(The $\beta$ in Equation~\ref{eq:prob_batch} of Section~\ref{section:training})\\
	\end{tabular}
	\caption{Hyper-parameters of the reinforcement-learning framework. They are chosen based on the networks' scores in the cross-validation
		set described in Table~\ref{tab:datarange} of Section~\ref{section:ranges}.
		Although these are the values used in the experiments of the paper, they are
		all adjustable in the framework.
	}
	\label{tab:hyperpar}
\end{table}
The hyper-parameters and their values used in the back-test experiments of the paper are listed in Table~\ref{tab:hyperpar}. These numbers are
selected to maximize the network scores in the cross-validation time-range (see Section~\ref{section:ranges}).
In order to avoid over-fitting, the cross-validation range and the back-tests do not overlap.

Different topologies of the IIEs were also tried in the cross-validation set, and it turned out that deeper network structures than those
presented in Figure~\ref{fig:cnn_topology} and \ref{fig:rnn_topology} did not improve scores on the set.


\bibliography{reference}

\begin{thebibliography}{55}
\providecommand{\natexlab}[1]{#1}
\providecommand{\url}[1]{\texttt{#1}}
\expandafter\ifx\csname urlstyle\endcsname\relax
  \providecommand{\doi}[1]{doi: #1}\else
  \providecommand{\doi}{doi: \begingroup \urlstyle{rm}\Url}\fi

\bibitem[Agarwal et~al.(2006)Agarwal, Hazan, Kale, and
  Schapire]{agarwal2006algorithms}
Amit Agarwal, Elad Hazan, Satyen Kale, and Robert~E Schapire.
\newblock Algorithms for portfolio management based on the newton method.
\newblock In \emph{Proceedings of the 23rd international conference on Machine
  learning}, pages 9--16. ACM, 2006.

\bibitem[Bentov et~al.(2014)Bentov, Lee, Mizrahi, and
  Rosenfeld]{bentov2014proof}
Iddo Bentov, Charles Lee, Alex Mizrahi, and Meni Rosenfeld.
\newblock Proof of activity: Extending bitcoin's proof of work via proof of
  stake [extended abstract] y.
\newblock \emph{ACM SIGMETRICS Performance Evaluation Review}, 42\penalty0
  (3):\penalty0 34--37, 2014.

\bibitem[Bonneau et~al.(2015)Bonneau, Miller, Clark, Narayanan, Kroll, and
  Felten]{bonneau2015sok}
Joseph Bonneau, Andrew Miller, Jeremy Clark, Arvind Narayanan, Joshua~A Kroll,
  and Edward~W Felten.
\newblock Sok: Research perspectives and challenges for bitcoin and
  cryptocurrencies.
\newblock In \emph{2015 IEEE Symposium on Security and Privacy}, pages
  104--121. IEEE, 2015.

\bibitem[Borodin et~al.(2000)Borodin, El-Yaniv, and
  Gogan]{borodin2000competitive}
Allan Borodin, Ran El-Yaniv, and Vincent Gogan.
\newblock On the competitive theory and practice of portfolio selection.
\newblock In \emph{Latin American Symposium on Theoretical Informatics}, pages
  173--196. Springer, 2000.

\bibitem[Borodin et~al.(2004)Borodin, El-Yaniv, and Gogan]{borodin2004can}
Allan Borodin, Ran El-Yaniv, and Vincent Gogan.
\newblock Can we learn to beat the best stock.
\newblock \emph{J. Artif. Intell. Res.(JAIR)}, 21:\penalty0 579--594, 2004.

\bibitem[Charles et~al.(2006)Charles, Kirkpatrick, and
  Dahlquist]{charles2006technical}
D~Charles, II~Kirkpatrick, and Julie~R Dahlquist.
\newblock Technical analysis: The complete resource for financial market
  technician.
\newblock \emph{ISBN-13}, pages 978--0137059447, 2006.

\bibitem[Cover(1991)]{cover1991universal}
Thomas~M Cover.
\newblock Universal portfolios.
\newblock \emph{Mathematical finance}, 1\penalty0 (1):\penalty0 1--29, 1991.

\bibitem[Cover(1996)]{cover1996universal}
Thomas~M Cover.
\newblock Universal data compression and portfolio selection.
\newblock In \emph{Foundations of Computer Science, 1996. Proceedings., 37th
  Annual Symposium on}, pages 534--538. IEEE, 1996.

\bibitem[Cumming(2015)]{cumming2013masterthesis}
James Cumming.
\newblock An investigation into the use of reinforcement learning techniques
  within the algorithmic trading domain.
\newblock Master's thesis, Imperial College London, United Kiongdoms, 2015.
\newblock URL
  \url{http://www.doc.ic.ac.uk/teaching/distinguished-projects/2015/j.cumming.pdf}.

\bibitem[Das and Banerjee(2011)]{Das2011}
Puja Das and Arindam Banerjee.
\newblock {Meta optimization and its application to portfolio selection}.
\newblock \emph{Proceedings of the 17th ACM SIGKDD international conference on
  Knowledge discovery and data mining - KDD '11}, page 1163, 2011.
\newblock \doi{10.1145/2020408.2020588}.
\newblock URL \url{http://dl.acm.org/citation.cfm?doid=2020408.2020588}.

\bibitem[Dempster and Leemans(2006)]{adempster2006daptive}
M.A.H. Dempster and V.~Leemans.
\newblock An automated fx trading system using adaptive reinforcement learning.
\newblock \emph{Expert Systems with Applications}, 30\penalty0 (3):\penalty0
  543 -- 552, 2006.
\newblock ISSN 0957-4174.
\newblock \doi{http://dx.doi.org/10.1016/j.eswa.2005.10.012}.
\newblock URL
  \url{http://www.sciencedirect.com/science/article/pii/S0957417405003015}.
\newblock Intelligent Information Systems for Financial Engineering.

\bibitem[Deng et~al.(2017)Deng, Bao, Kong, Ren, and Dai]{deng2017deep}
Yue Deng, Feng Bao, Youyong Kong, Zhiquan Ren, and Qionghai Dai.
\newblock Deep direct reinforcement learning for financial signal
  representation and trading.
\newblock \emph{IEEE transactions on neural networks and learning systems},
  28\penalty0 (3):\penalty0 653--664, 2017.

\bibitem[Duffield and Hagan(2014)]{duffield2014darkcoin}
Evan Duffield and Kyle Hagan.
\newblock Darkcoin: Peer to peer cryptocurrency with anonymous blockchain
  transactions and an improved proofofwork system.
\newblock \emph{bitpaper.info}, 2014.

\bibitem[Freitas et~al.(2009)Freitas, De~Souza, and
  de~Almeida]{freitas2009prediction}
Fabio~D Freitas, Alberto~F De~Souza, and Ailson~R de~Almeida.
\newblock Prediction-based portfolio optimization model using neural networks.
\newblock \emph{Neurocomputing}, 72\penalty0 (10):\penalty0 2155--2170, 2009.

\bibitem[Fukushima(1980)]{fukushima1980neocognitron}
Kunihiko Fukushima.
\newblock Neocognitron: A self-organizing neural network model for a mechanism
  of pattern recognition unaffected by shift in position.
\newblock \emph{Biological cybernetics}, 36\penalty0 (4):\penalty0 193--202,
  1980.

\bibitem[Gao and Zhang(2013)]{gao2013weighted}
Li~Gao and Weiguo Zhang.
\newblock Weighted moving average passive aggressive algorithm for online
  portfolio selection.
\newblock In \emph{Intelligent Human-Machine Systems and Cybernetics (IHMSC),
  2013 5th International Conference on}, volume~1, pages 327--330. IEEE, 2013.

\bibitem[Grinberg(2012)]{grinberg2012bitcoin}
Reuben Grinberg.
\newblock Bitcoin: An innovative alternative digital currency.
\newblock \emph{Hastings Sci. \& Tech. LJ}, 4:\penalty0 159, 2012.

\bibitem[Gy{\"o}rfi et~al.(2006)Gy{\"o}rfi, Lugosi, and
  Udina]{gyorfi2006nonparametric}
L{\'a}szl{\'o} Gy{\"o}rfi, G{\'a}bor Lugosi, and Frederic Udina.
\newblock Nonparametric kernel-based sequential investment strategies.
\newblock \emph{Mathematical Finance}, 16\penalty0 (2):\penalty0 337--357,
  2006.

\bibitem[Haugen(1986)]{haugen1986inv}
Robert~A Haugen.
\newblock \emph{Modern investment theory}.
\newblock Prentice Hall, 1986.

\bibitem[Heaton et~al.(2016)Heaton, Polson, and Witte]{Heaton2016}
J.~B. Heaton, N.~G. Polson, and Jan~Hendrik Witte.
\newblock {Deep learning for finance: deep portfolios}.
\newblock \emph{Applied Stochastic Models in Business and Industry}, 2016.
\newblock ISSN 1526-4025.
\newblock \doi{10.1002/ASMB.2209}.
\newblock URL \url{http://www.ssrn.com/abstract=2838013}.

\bibitem[Helmbold et~al.(1998)Helmbold, Schapire, Singer, and
  Warmuth]{helmbold1998line}
David~P Helmbold, Robert~E Schapire, Yoram Singer, and Manfred~K Warmuth.
\newblock On-line portfolio selection using multiplicative updates.
\newblock \emph{Mathematical Finance}, 8\penalty0 (4):\penalty0 325--347, 1998.

\bibitem[Hochreiter and Schmidhuber(1997)]{hochreiter1997long}
Sepp Hochreiter and J{\"u}rgen Schmidhuber.
\newblock Long short-term memory.
\newblock \emph{Neural computation}, 9\penalty0 (8):\penalty0 1735--1780, 1997.

\bibitem[Holt(2004)]{holt2004forecasting}
Charles~C Holt.
\newblock Forecasting seasonals and trends by exponentially weighted moving
  averages.
\newblock \emph{International journal of forecasting}, 20\penalty0
  (1):\penalty0 5--10, 2004.

\bibitem[Huang et~al.(2013)Huang, Zhou, Li, Hoi, and Zhou]{huang2013robust}
Dingjiang Huang, Junlong Zhou, Bin Li, Steven~CH Hoi, and Shuigeng Zhou.
\newblock Robust median reversion strategy for on-line portfolio selection.
\newblock In \emph{IJCAI}, pages 2006--2012, 2013.

\bibitem[Jiang and Liang(2017)]{jiang2016}
Zhengyao Jiang and Jinjun Liang.
\newblock Cryptocurrency portfolio management with deep reinforcement learning.
\newblock In \emph{Proceedings of 2017 Intelligent Systems Conference}. SAI
  Conferences, 2017.
\newblock Preprint: arXiv:1612.01277 [cs.LG].

\bibitem[Kelly(1956)]{kelly1956crp}
J.~L. Kelly.
\newblock A new interpretation of information rate.
\newblock \emph{The Bell System Technical Journal}, 35\penalty0 (4):\penalty0
  917--926, July 1956.
\newblock ISSN 0005-8580.
\newblock \doi{10.1002/j.1538-7305.1956.tb03809.x}.

\bibitem[Kingma and Ba(2014)]{Kingma2014}
Diederik Kingma and Jimmy Ba.
\newblock {Adam: A Method for Stochastic Optimization}.
\newblock \emph{International Conference on Learning Representations}, pages
  1--13, dec 2014.
\newblock URL \url{http://arxiv.org/abs/1412.6980}.

\bibitem[Krizhevsky et~al.(2012)Krizhevsky, Sutskever, and
  Hinton]{krizhevsky2012imagenet}
Alex Krizhevsky, Ilya Sutskever, and Geoffrey~E Hinton.
\newblock Imagenet classification with deep convolutional neural networks.
\newblock In \emph{Advances in neural information processing systems}, pages
  1097--1105, 2012.

\bibitem[Leithold(1996)]{leithold1996calculus}
Louis Leithold.
\newblock \emph{The calculus 7}.
\newblock HarperCollins College Publishing, 1996.

\bibitem[Li et~al.(2015{\natexlab{a}})Li, Sahoo, and Hoi]{li2015olps}
B~Li, D~Sahoo, and SCH Hoi.
\newblock Olps: A toolbox for online portfolio selection.
\newblock \emph{Journal of Machine Learning Research (JMLR)},
  2015{\natexlab{a}}.

\bibitem[Li and Hoi(2014)]{li2014survey}
Bin Li and Steven~CH Hoi.
\newblock Online portfolio selection: A survey.
\newblock \emph{ACM Computing Surveys (CSUR)}, 46\penalty0 (3):\penalty0 35,
  2014.

\bibitem[Li et~al.(2011)Li, Hoi, and Gopalkrishnan]{li2011corn}
Bin Li, Steven~CH Hoi, and Vivekanand Gopalkrishnan.
\newblock Corn: Correlation-driven nonparametric learning approach for
  portfolio selection.
\newblock \emph{ACM Transactions on Intelligent Systems and Technology (TIST)},
  2\penalty0 (3):\penalty0 21, 2011.

\bibitem[Li et~al.(2012)Li, Zhao, Hoi, and Gopalkrishnan]{li2012pamr}
Bin Li, Peilin Zhao, Steven C.~H. Hoi, and Vivekanand Gopalkrishnan.
\newblock {PAMR: Passive aggressive mean reversion strategy for portfolio
  selection}.
\newblock \emph{Machine Learning}, 87\penalty0 (2):\penalty0 221--258, may
  2012.
\newblock ISSN 08856125.
\newblock \doi{10.1007/s10994-012-5281-z}.
\newblock URL \url{http://link.springer.com/10.1007/s10994-012-5281-z}.

\bibitem[Li et~al.(2013)Li, Hoi, Zhao, and Gopalkrishnan]{li2013confidence}
Bin Li, Steven~CH Hoi, Peilin Zhao, and Vivekanand Gopalkrishnan.
\newblock Confidence weighted mean reversion strategy for online portfolio
  selection.
\newblock \emph{ACM Transactions on Knowledge Discovery from Data (TKDD)},
  7\penalty0 (1):\penalty0 4, 2013.

\bibitem[Li et~al.(2015{\natexlab{b}})Li, Hoi, Sahoo, and Liu]{li2015moving}
Bin Li, Steven~CH Hoi, Doyen Sahoo, and Zhi-Yong Liu.
\newblock Moving average reversion strategy for on-line portfolio selection.
\newblock \emph{Artificial Intelligence}, 222:\penalty0 104--123,
  2015{\natexlab{b}}.

\bibitem[Lillicrap et~al.(2016)Lillicrap, Hunt, Pritzel, Heess, Erez, Tassa,
  Silver, and Wierstra]{Lillicrap2016}
Timothy~P. Lillicrap, Jonathan~J. Hunt, Alexander Pritzel, Nicolas Heess, Tom
  Erez, Yuval Tassa, David Silver, and Daan Wierstra.
\newblock {Continuous Control with Deep Reinforcement Learning}.
\newblock \emph{arXiv}, 2016.
\newblock ISSN 1935-8237.
\newblock \doi{10.1561/2200000006}.

\bibitem[Lo et~al.(2000)Lo, Mamaysky, and Wang]{lo2000foundations}
Andrew~W Lo, Harry Mamaysky, and Jiang Wang.
\newblock Foundations of technical analysis: Computational algorithms,
  statistical inference, and empirical implementation.
\newblock \emph{The journal of finance}, 55\penalty0 (4):\penalty0 1705--1770,
  2000.

\bibitem[Magdon-Ismail and Atiya(2004)]{magdon2004maximum}
Malik Magdon-Ismail and Amir~F Atiya.
\newblock Maximum drawdown.
\newblock \emph{Risk Magazine}, 17\penalty0 (1):\penalty0 99--102, 2004.

\bibitem[Markowitz(1968)]{markowitz1968portfolio}
Harry~M Markowitz.
\newblock \emph{Portfolio selection: efficient diversification of investments},
  volume~16.
\newblock Yale university press, 1968.

\bibitem[Mnih et~al.(2015)Mnih, Kavukcuoglu, Silver, Rusu, Veness, Bellemare,
  Graves, Riedmiller, Fidjeland, Ostrovski, Petersen, Beattie, Sadik,
  Antonoglou, King, Kumaran, Wierstra, Legg, and Hassabis]{Mnih2015}
Volodymyr Mnih, Koray Kavukcuoglu, David Silver, Andrei~A. Rusu, Joel Veness,
  Marc~G. Bellemare, Alex Graves, Martin Riedmiller, Andreas~K. Fidjeland,
  Georg Ostrovski, Stig Petersen, Charles Beattie, Amir Sadik, Ioannis
  Antonoglou, Helen King, Dharshan Kumaran, Daan Wierstra, Shane Legg, and
  Demis Hassabis.
\newblock {Human-level control through deep reinforcement learning}.
\newblock \emph{Nature}, 518\penalty0 (7540):\penalty0 529--533, feb 2015.
\newblock ISSN 0028-0836.
\newblock \doi{10.1038/nature14236}.
\newblock URL \url{http://www.nature.com/doifinder/10.1038/nature14236
  http://dx.doi.org/10.1038/nature14236}.

\bibitem[Mnih et~al.(2016)Mnih, Badia, Mirza, Graves, Lillicrap, Harley,
  Silver, and Kavukcuoglu]{mnih2016asynchronous}
Volodymyr Mnih, Adria~Puigdomenech Badia, Mehdi Mirza, Alex Graves, Timothy
  Lillicrap, Tim Harley, David Silver, and Koray Kavukcuoglu.
\newblock Asynchronous methods for deep reinforcement learning.
\newblock In \emph{International Conference on Machine Learning}, pages
  1928--1937, 2016.

\bibitem[Moody and Saffell(2001)]{moody2001drl}
J.~Moody and M.~Saffell.
\newblock Learning to trade via direct reinforcement.
\newblock \emph{IEEE Transactions on Neural Networks}, 12\penalty0
  (4):\penalty0 875--889, Jul 2001.
\newblock ISSN 1045-9227.
\newblock \doi{10.1109/72.935097}.

\bibitem[Moody et~al.(1998)Moody, Wu, Liao, and Saffell]{moody1998performance}
John Moody, Lizhong Wu, Yuansong Liao, and Matthew Saffell.
\newblock Performance functions and reinforcement learning for trading systems
  and portfolios.
\newblock \emph{Journal of Forecasting}, 17\penalty0 (56):\penalty0 441--470,
  1998.

\bibitem[Nakamoto(2008)]{nakamoto2008bitcoin}
Satoshi Nakamoto.
\newblock Bitcoin: A peer-to-peer electronic cash system, 2008.

\bibitem[Niaki and Hoseinzade(2013)]{Niaki2013}
Seyed Taghi~Akhavan Niaki and Saeid Hoseinzade.
\newblock {Forecasting S{\&}P 500 index using artificial neural networks and
  design of experiments}.
\newblock \emph{Journal of Industrial Engineering International}, 9\penalty0
  (1):\penalty0 1, 2013.
\newblock ISSN 2251-712X.
\newblock \doi{10.1186/2251-712X-9-1}.
\newblock URL \url{http://www.jiei-tsb.com/content/9/1/1}.

\bibitem[Ormos and Urb{\'a}n(2013)]{ormos2013}
Mih{\'a}ly Ormos and Andr{\'a}s Urb{\'a}n.
\newblock Performance analysis of log-optimal portfolio strategies with
  transaction costs.
\newblock \emph{Quantitative Finance}, 13\penalty0 (10):\penalty0 1587--1597,
  2013.

\bibitem[Rogers and Satchell(1991)]{rogers1991estimating}
L~Christopher~G Rogers and Stephen~E Satchell.
\newblock Estimating variance from high, low and closing prices.
\newblock \emph{The Annals of Applied Probability}, pages 504--512, 1991.

\bibitem[Rudin(1976)]{rudin1964principles}
Walter Rudin.
\newblock \emph{Principles of mathematical analysis}.
\newblock McGraw-Hill New York, 3 edition, 1976.
\newblock ISBN 9780070856134.

\bibitem[Sermanet et~al.(2012)Sermanet, Chintala, and
  LeCun]{sermanet2012convolutional}
Pierre Sermanet, Soumith Chintala, and Yann LeCun.
\newblock Convolutional neural networks applied to house numbers digit
  classification.
\newblock In \emph{Pattern Recognition (ICPR), 2012 21st International
  Conference on}, pages 3288--3291. IEEE, 2012.

\bibitem[Sharpe(1964)]{sharpe1964}
William~F Sharpe.
\newblock Capital asset prices: A theory of market equilibrium under conditions
  of risk.
\newblock \emph{The journal of finance}, 19\penalty0 (3):\penalty0 425--442,
  1964.

\bibitem[Sharpe(1994)]{sharpe1994}
William~F Sharpe.
\newblock The sharpe ratio.
\newblock \emph{The journal of portfolio management}, 21\penalty0 (1):\penalty0
  49--58, 1994.

\bibitem[Silver et~al.(2014)Silver, Lever, Heess, Degris, Wierstra, and
  Riedmiller]{Silver2014}
David Silver, Guy Lever, Nicolas Heess, Thomas Degris, Daan Wierstra, and
  Martin Riedmiller.
\newblock {Deterministic Policy Gradient Algorithms}.
\newblock \emph{Proceedings of the 31st International Conference on Machine
  Learning (ICML-14)}, pages 387--395, 2014.

\bibitem[Silver et~al.(2016)Silver, Huang, Maddison, Guez, Sifre, Van
  Den~Driessche, Schrittwieser, Antonoglou, Panneershelvam, Lanctot,
  et~al.]{silver2016alphago}
David Silver, Aja Huang, Chris~J Maddison, Arthur Guez, Laurent Sifre, George
  Van Den~Driessche, Julian Schrittwieser, Ioannis Antonoglou, Veda
  Panneershelvam, Marc Lanctot, et~al.
\newblock Mastering the game of go with deep neural networks and tree search.
\newblock \emph{Nature}, 529\penalty0 (7587):\penalty0 484--489, 2016.

\bibitem[Vovk and Watkins(1998)]{vovk1998universal}
Volodya Vovk and Chris Watkins.
\newblock Universal portfolio selection.
\newblock In \emph{Proceedings of the eleventh annual conference on
  Computational learning theory}, pages 12--23. ACM, 1998.

\bibitem[Werbos(1988)]{werbos1988generalization}
Paul~J Werbos.
\newblock Generalization of backpropagation with application to a recurrent gas
  market model.
\newblock \emph{Neural networks}, 1\penalty0 (4):\penalty0 339--356, 1988.

\end{thebibliography}

\end{document}